\renewcommand{\qed}{\hfill $\square$}
\DeclareRobustCommand{\bfseries}{%
  \not@math@alphabet\bfseries\mathbf
  \fontseries\bfdefault\selectfont
  \boldmath
}
\Crefname{figure}{Fig.}{Figs.}
\title{Sketched Representations and Orthogonal Planarity of Bounded Treewidth Graphs\thanks{
Research partially supported by: (i) MIUR, grant 20174LF3T8 AHeAD: efficient Algorithms for HArnessing networked Data.; (ii) Engineering Dep. - University of Perugia,  grants RICBASE2017WD and RICBA18WD: ``Algoritmi e sistemi di analisi visuale di  reti complesse e di grandi dimensioni''.
}}
\author{
Emilio Di Giacomo\orcidID{0000-0002-9794-1928}\and
Giuseppe Liotta\orcidID{0000-0002-2886-9694}\and
Fabrizio Montecchiani\orcidID{0000-0002-0543-8912} \Letter
}
\institute{Department of Engineering, University of Perugia, Italy\\
	\email{\{emilio.digiacomo, giuseppe.liotta, fabrizio.montecchiani\}@unipg.it}
}
\authorrunning{E. Di Giacomo \and G. Liotta \and F. Montecchiani}
\titlerunning{Orthogonal Planarity of Bounded Treewidth Graphs}
\newcommand{\p}{\pi}
\newcommand{\ph}{\frac{\pi}{2}}
\newcommand{\pt}{\frac{3\pi}{2}}
\newcommand{\compact}{sketched embedding\xspace}
\newcommand{\shaped}{orthogonal sketch\xspace}
\newcommand{\algo}{\textsc{OrthoPlanTester}\xspace}
\newcommand{\hv}{\textsc{HV-Planarity}\xspace}
\newcommand{\flex}{\textsc{FlexDraw}\xspace}
\newcommand{\ortho}{\textsc{OrthogonalPlanarity}\xspace}
\newenvironment{sketch}
{\noindent{\bf {Proof sketch.}}}{\hspace*{\fill}\qed\par\vspace{2mm}}
\begin{document}

\maketitle

%\pagenumbering{arabic}
%\pagestyle{plain}

\begin{abstract}
Given a planar graph $G$ and an integer $b$, \ortho is the problem of deciding whether $G$ admits an orthogonal drawing with at most $b$ bends in total. We show that \ortho can be solved in polynomial time if $G$ has bounded treewidth. Our proof is based on an FPT algorithm whose parameters are the number of bends, the treewidth and the number of degree-2 vertices of $G$. This result is based on the concept of sketched orthogonal representation that synthetically describes a family of equivalent orthogonal representations. Our approach can be extended to related problems such as \hv and \flex. In particular,  both \ortho and \hv can be decided in $O(n^3 \log n)$ time for series-parallel graphs, which improves over the previously known $O(n^4)$ bounds.
\end{abstract}

\section{Introduction}

An \emph{orthogonal drawing} of a planar graph $G$ is a planar drawing where each edge is drawn as a chain of horizontal and vertical segments; see~\cref{fi:compact-a}. Orthogonal drawings are among the most investigated research subjects in graph drawing, see, e.g.,~\cite{DBLP:journals/comgeo/BiedlK98,DBLP:journals/talg/BiedlLPS13,DBLP:journals/algorithmica/BlasiusKRW14,DBLP:conf/compgeom/ChangY17,DBLP:journals/siamcomp/BattistaLV98,DBLP:conf/gd/DidimoLP18,DBLP:journals/siamcomp/GargT01,DBLP:journals/ieicet/RahmanEN05,t-eggmnb-87,DBLP:journals/siamdm/ZhouN08} for a limited list of references, and also~\cite{dett-gdavg-99,DBLP:reference/crc/DuncanG13} for surveys.
The \ortho problem asks whether $G$ admits an orthogonal drawing with at most $b$ bends in total, for a given $b \in \mathbb{N}$.

In a seminal paper, Garg and Tamassia~\cite{DBLP:journals/siamcomp/GargT01} proved that \ortho is NP-complete when $b=0$, which implies that minimizing the number of bends is also NP-hard. In fact, it is even NP-hard to approximate the minimum number of bends in an orthogonal drawing with an $O(n^{1-\varepsilon})$ error for any $\varepsilon > 0$~\cite{DBLP:journals/siamcomp/GargT01}. On the positive side, Tamassia~\cite{t-eggmnb-87} showed that \ortho can be decided in polynomial time if the input graph is \emph{plane}, i.e., it has a fixed embedding in the plane. When a planar embedding is not given as part of the input, polynomial-time algorithms exist for some restricted cases, namely subcubic planar graphs and series-parallel graphs, see, e.g.,~\cite{DBLP:conf/compgeom/ChangY17,DBLP:journals/siamcomp/BattistaLV98,DBLP:conf/gd/DidimoLP18,DBLP:journals/ieicet/RahmanEN05,DBLP:journals/siamdm/ZhouN08}.

Given the hardness result for \ortho, a natural research direction is to investigate its parameterized complexity~\cite{DBLP:series/mcs/DowneyF99}. Despite the rich literature about orthogonal drawings, this direction has been surprisingly disregarded. The only exception is a result by Didimo and Liotta~\cite{DBLP:conf/isaac/DidimoL98}, who described an algorithm for  biconnected planar graphs that runs in $O(6^{r} n^4 \log n)$ time, where $r$ is the number of degree-4 vertices.
We recall that FPT algorithms have been proposed for other graph drawing problems, such as upward planarity~\cite{DBLP:conf/esa/Chan04,DBLP:journals/siamdm/DidimoGL09,DBLP:journals/ijfcs/HealyL06}, layered drawings~\cite{DBLP:journals/algorithmica/DujmovicFKLMNRRWW08}, linear layouts~\cite{DBLP:journals/jgaa/BannisterE18,DBLP:journals/jda/DujmovicFK08,DBLP:journals/algorithmica/DujmovicW04}, and $1$-planarity~\cite{DBLP:journals/jgaa/BannisterCE18}.

\smallskip\noindent\textbf{Contribution.} We describe an FPT algorithm for \ortho whose parameters are the number of bends, the treewidth and the number of degree-2 vertices of the input graph.
We recall that the notion of treewidth~\cite{DBLP:journals/jal/RobertsonS86} is commonly used as a parameter in the parameterized complexity analysis (see also \cref{se:preliminaries}). The algorithm works for planar graphs of degree four with no restriction on the connectivity. Our main contribution is summarized as follows.

\begin{theorem}\label{thm:main}
Let $G$ be an $n$-vertex planar graph with $\sigma$ degree-$2$ vertices and let $b \in \mathbb{N}$.
Given a tree-decomposition of $G$ of width $k$, there is an algorithm that decides \ortho in $f(k,\sigma, b) \cdot n$ time, where $f(k,\sigma, b)=k^{O(k)}(\sigma+b)^{k}\log (\sigma+b)$. The algorithm computes a drawing of $G$, if one exists.
\end{theorem}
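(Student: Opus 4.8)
The plan is to reduce \ortho to deciding whether $G$ admits an \emph{orthogonal representation} with at most $b$ bends, and then to search for one by dynamic programming over a nice tree-decomposition of $G$; the state kept at each bag is an \shaped, i.e.\ a bounded-size combinatorial summary of all partial orthogonal representations of the part of $G$ processed so far.

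First I recall Tamassia's characterization: $G$ has an orthogonal drawing with at most $b$ bends if and only if it has an orthogonal representation with at most $b$ bends, where such a representation fixes a planar embedding and records, for every face, the angles at its vertices and the bends along its edges, subject to the local condition that the angles around every vertex sum to $2\pi$ and the global condition that the rotation of every internal face equals $+4$ and that of the outer face equals $-4$ (in units of $\pi/2$). I also carry out routine preprocessing: connected components are treated independently and their bend counts added; we assume $G$ has maximum degree $4$ and, by a simple reduction, no vertex of degree at most $1$; the outer face is fixed by trying the polynomially many candidates (or guessed within the DP); and the given tree-decomposition is turned, with only a linear blow-up, into a nice one with leaf, introduce-vertex, introduce-edge, forget and join nodes.

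Next I introduce the \shaped and prove the combination lemma that underpins the DP. Let $G'$ be the subgraph associated with a node $t$, let $U=V(G')\cap X_t$ (so $|U|\le k+1$), and consider a partial orthogonal representation $R'$ of $G'$ in which every vertex of $U$ is incident to the \emph{open} part of $G'$, that is, to the faces of $G'$ that will be shared with the rest of $G$; each such face contains at least one vertex of $U$, so there are $O(k)$ of them. The \shaped of $R'$ records: (i) the combinatorial structure of the open faces, together with the cyclic arrangement of the occurrences of the vertices of $U$ (and of their groups of $G'$-edges) along these faces and the angle between consecutive $G'$-edges at each such occurrence, equivalently the free wedges still available to edges outside $G'$; (ii) for every arc of an open face running between two consecutive interface occurrences, the total rotation along that arc; and (iii) the number of bends already used inside $G'$, capped at $b+1$. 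The combination lemma states that two partial representations of $G'$ with the same \shaped are interchangeable inside any orthogonal representation of $G$, since the rest of $G$ interacts with $G'$ only through the free wedges at $U$ and the arcs of the open faces, all of which the \shaped determines. Consequently the DP only needs the set of \emph{realizable} sketches at each node. To bound this set I observe that the rotation of any arc of an open face lies in an interval of size $O(\sigma+b)$: a reflex (or $2\pi$) angle can occur only at a vertex of degree at most $2$ in $G'$, and a vertex whose degree in $G'$ is smaller than its degree in $G$ is incident to an edge that leaves $G'$ and therefore lies in $U$, so $G'$ has at most $\sigma+|U|$ such vertices; since the rotations of the arcs of any open face sum to $+4$ or $-4$, the positive part of each arc is $O(\sigma+b)$ as well. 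Hence each bag carries $k^{O(k)}(\sigma+b)^{O(k)}$ realizable sketches.

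Finally I run the DP and account for the running time. Leaf and introduce-vertex nodes are immediate. At an introduce-edge node I replace each sketch by all (boundedly many) sketches obtained by routing the new edge through one of the current open faces: this fixes the two incident free wedges and the number and direction of the edge's bends, and possibly splits the open face so as to close off a region, in which case I check that the region is a legal internal face (its two bounding sub-arcs and the edge's turns sum to $4$). At a forget-vertex node I drop the forgotten vertex $v$ from the interface after verifying that its angles sum to $2\pi$ (unless $v$ is designated to lie on the outer face) and that every face that becomes internal is legal; all edges incident to $v$ are already present, by the semantics of a nice tree-decomposition. At a join node I combine a sketch of one child with a sketch of the other by all ways of interleaving their open faces consistently with the shared cyclic order on $U$, checking at each $u\in U$ that the two collections of free wedges fit together (total angle at most $2\pi$), validating every internal face formed by pairing an arc of one side with an arc of the other, and recomputing the merged open faces and arc rotations; bend counts add. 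At the root the answer is ``yes'' iff some realizable sketch is fully closed, has outer rotation $-4$, and uses at most $b$ bends; an actual drawing is recovered by standard backtracking. Each of the $O(n)$ nodes is handled in time polynomial in the number of sketches, and arithmetic on the $O(\log(\sigma+b))$-bit rotation values contributes a $\log(\sigma+b)$ factor, giving total time $k^{O(k)}(\sigma+b)^{O(k)}\log(\sigma+b)\cdot n$; a tighter count of the realizable sketches, keeping only $k$ independent numeric coordinates and performing the joins without squaring the state count, lowers the exponent of $(\sigma+b)$ to $k$, as claimed. I expect the main obstacle to be the combination lemma, hand in hand with the design of the \shaped: this summary must capture exactly the free wedges at the interface, the structure of the open faces, and the rotation of each open arc --- and nothing more --- for interchangeability to hold. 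A second delicate point is the correctness and the bounded branching ($k^{O(k)}$) of the join operation, where the planarity of $G$ and the cyclic structure of face boundaries are essential. Finally, the bound $O(\sigma+b)$ on the arc rotations, and hence on the state space, is the step that genuinely requires the number of degree-$2$ vertices as a parameter. \qed
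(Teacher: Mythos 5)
Your proposal is correct and follows essentially the same route as the paper: Tamassia's angle/rotation characterization, a dynamic program over a nice tree-decomposition whose states are bounded-size ``sketches'' of the interface faces (cyclic order of bag vertices, free angles at them, and the rotation of each boundary arc between consecutive bag vertices), with the state space bounded by $k^{O(k)}(\sigma+b)^{k}$ via exactly the two observations the paper uses --- reflex angles force degree-$2$ vertices or bends, and only $k$ of the arc rotations are independent. The only cosmetic differences are your use of introduce-edge nodes (the paper attaches all edges when a vertex is introduced) and your preprocessing away of degree-$1$ vertices (the paper handles them directly via $2\pi$ angles).
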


\noindent For an $n$-vertex graph $G$ of treewidth $k$, a tree-decomposition of $G$ of width $k$ can be found in $k^{O(k^3)} \, n$ time~\cite{DBLP:journals/siamcomp/Bodlaender96}, while a tree-decomposition of width $O(k)$ can be computed in $2^{O(k)} \, n$ time~\cite{DBLP:journals/siamcomp/BodlaenderDDFLP16}.
The function $f(k,\sigma,b)$ depends exponentially on neither $\sigma$ nor $b$.
Since both $\sigma$ and $b$ are $O(n)$~\cite{DBLP:journals/comgeo/BiedlK98}, \ortho can be solved in time $n^{g(k)}$ for some polynomial function $g(k)$, and thus it belongs to the XP class when parameterized by treewidth~\cite{DBLP:series/mcs/DowneyF99}. Moreover, since the number of bends in a bend-minimum orthogonal drawing is $O(n)$, the next result follows from \cref{thm:main}, performing a binary search on $b$.

\begin{corollary}\label{co:rpt-bt}
Let $G$ be an $n$-vertex planar graph.
Given a tree-decomposition of $G$ of width $k$, there is an algorithm that decides \ortho in $k^{O(k)}n^{k+1}\log n$ time. Also, a bend-minimum orthogonal drawing of $G$ can be computed in $k^{O(k)}n^{k+1}\log^2 n$ time.
\end{corollary}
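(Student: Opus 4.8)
The plan is to derive \cref{co:rpt-bt} directly from \cref{thm:main} by observing that both $\sigma$ and the relevant range of $b$ are $O(n)$, and then, for the bend-minimum statement, to wrap the decision algorithm in a binary search over the number of bends. First I would dispose of the trivial case: if $G$ has a vertex of degree larger than four it admits no orthogonal drawing, so we reject; hence we may assume $\Delta(G)\le 4$.

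For the decision part, I would invoke the classical bound of Biedl and Kant~\cite{DBLP:journals/comgeo/BiedlK98}: every planar graph of maximum degree four admits an orthogonal drawing with at most $B:=cn$ bends in total for an absolute constant $c$. Thus, when the given bound satisfies $b>B$, the answer to \ortho is trivially \emph{yes} (and a witness drawing is returned by the Biedl--Kant algorithm); otherwise $b\le B=O(n)$. Since also $\sigma\le n$, we have $\sigma+b=O(n)$, and substituting into the running time $f(k,\sigma,b)\cdot n = k^{O(k)}(\sigma+b)^{k}\log(\sigma+b)\cdot n$ of \cref{thm:main} gives $k^{O(k)}n^{k+1}\log n$, where the constant factors $c^{k}$ are absorbed into $k^{O(k)}$. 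This yields the first bound.

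For the bend-minimum part I would use that the predicate ``$G$ admits an orthogonal drawing with at most $b$ bends'' is monotone in $b$, so the minimum number $b^{\star}$ of bends is the least $b$ at which the algorithm of \cref{thm:main} answers \emph{yes}; by the Biedl--Kant bound, $b^{\star}\le B=O(n)$. A binary search for $b^{\star}$ over $\{0,1,\dots,B\}$ makes $O(\log B)=O(\log n)$ calls to \cref{thm:main}, each costing $k^{O(k)}n^{k+1}\log n$ by the previous paragraph, for a total of $k^{O(k)}n^{k+1}\log^{2}n$; invoking \cref{thm:main} once more at $b=b^{\star}$ (or reusing the last successful call, which already outputs a drawing) produces a bend-minimum orthogonal drawing. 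There is no genuine obstacle here; the only point that must be made precise is the linear upper bound on the number of bends, as it is exactly what keeps the parameter $b$ polynomially bounded for the decision problem and the search space logarithmic for the optimization problem, and this is supplied by~\cite{DBLP:journals/comgeo/BiedlK98}.
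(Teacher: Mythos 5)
Your proof is correct and follows essentially the same route as the paper: the authors likewise invoke the $O(n)$ bound on $\sigma$ and on the number of bends from~\cite{DBLP:journals/comgeo/BiedlK98} to substitute into \cref{thm:main}, and obtain the bend-minimum bound by a binary search on $b$ contributing the extra $\log n$ factor. The only additions you make (handling $\Delta(G)>4$ explicitly and noting monotonicity in $b$) are correct and merely make explicit what the paper leaves implicit.
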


\noindent By \cref{co:rpt-bt} \ortho can be decided in $O(n^{3} \log n)$ time for graphs of treewidth two, and hence bend-minimum orthogonal drawings can be computed in $O(n^{3} \log^2 n)$ time. We remark that the best previous result for these graph, dating back to twenty years ago, is an $O(n^4)$ algorithm by Di Battista et al.~\cite{DBLP:journals/siamcomp/BattistaLV98} which however is restricted to biconnected graphs (whereas ours is not).

%This improves a result of Di Battista et al.~\cite{DBLP:journals/siamcomp/BattistaLV98}, dating back to twenty years ago, in two ways: First, the time complexity is reduced from $O(n^4)$ to $O(n^{3} \log^2 n)$; second, the result by Di Battista et al. only works for biconnected graphs, while ours has no restriction in terms of connectivity.

%
%
%
%our result applies 
%
%to graph with treewidth two with no restriction on the connectivity, while the result by Di Battista et al. only works for biconnected graphs.  

%
%is the first improvement of the $O(n^4)$ bounds by Di Battista et al.~\cite{DBLP:journals/siamcomp/BattistaLV98}, dating back to twenty years ago, both in terms of time complexity 

\smallskip

Our FPT approach can be applied to related problems, namely to \hv and \flex. \hv takes as input a planar graph $G$ whose edges are each labeled H (horizontal) or V (vertical) and it asks whether $G$ admits an orthogonal drawing with no bends and in which the direction of each edge is consistent with its label. As a corollary of our results, we can decide \hv in  $O(n^3\log n)$ time for series-parallel graphs, which improves a recent $O(n^4)$ bound by Didimo et al.~\cite{DBLP:journals/jcss/DidimoLP19} and addresses one of the open problems in that paper. \flex takes as input a planar graph $G$ whose edges have integer weights and it asks whether $G$ admits an orthogonal drawing where each edge has a number of bends that is at most its weight~\cite{DBLP:journals/algorithmica/BlasiusKRW14,DBLP:journals/comgeo/BlasiusLR16}.

\smallskip\noindent\textbf{Proof strategy and paper organization.} The first ingredient of our approach is a well-known combinatorial characterization of orthogonal drawings (see~\cite{dett-gdavg-99,t-eggmnb-87}) that transforms \ortho to the problem of testing the existence of a planar embedding along with a suitable angle assignment to each vertex-face and edge-face incidence (see \cref{se:preliminaries}). The second ingredient is the definition of a suitable data structure, called \emph{\shaped{es}}, that encodes sufficient information about any such combinatorial representation, and in particular it makes it possible to decide whether the representation can be extended with further vertices incident to a given vertex cutset of the graph (see \cref{se:compact}). The proposed algorithm (see \cref{se:algo}) traverses a tree-decomposition of the input graph and stores a limited number of \shaped{es} for each node of the tree, rather than all its possible orthogonal drawings. This number depends on the width of the tree-decomposition, on the number of bends, and on the number of degree-2 vertices. The key observation is that a vertex of degree greater than two may correspond to a right turn when walking clockwise along the boundary of a face but not to a left turn, while a degree-2 vertex  may correspond to both a left or a right turn. Thus, the number of degree-2 vertices, as well as the number of bends, have an impact in how much a face can ``roll-up'' in the drawing, which in our approach translates in possible weights that can be assigned to the edges of an \shaped. The extensions of our approach can be found in \cref{se:extensions}, while conclusions and open problems are in \cref{se:conclusions}. For reasons of space, some proofs have been moved to the appendix and the corresponding statements are marked with an asterisk~(*).

\section{Preliminaries}\label{se:preliminaries}

%\subsection{Embeddings}\label{sse:drawings}
\smallskip\noindent\textbf{Embeddings.}
%A \emph{drawing} $\Gamma$ of a graph $G=(V,E)$ is a mapping of the vertices of $V$ to points of the plane, and of the edges in $E$ to Jordan arcs connecting their corresponding endpoints but not passing through any other vertex. We only consider \emph{simple} drawings, i.e., such that two arcs representing two edges have at most one point in common, which is either a common endpoint or a common interior point where the two arcs properly cross each other. $\Gamma$ is \emph{planar} if no edge is crossed. A graph is \emph{planar} if it admits a planar drawing.
%
We assume familiarity with basic notions about graph drawings.
A planar drawing of a planar graph $G$ subdivides the plane into topologically connected regions, called \emph{faces}. The infinite region is the \emph{outer face}. A \emph{planar embedding} of $G$ is an equivalence class of planar drawings that define the same set of faces and with the same outer face. A planar embedding of a connected graph can be uniquely identified by specifying its \emph{rotation system}, i.e., the clockwise circular order of the edges around each vertex, and the outer face. A \emph{plane graph} $G$ is a planar graph with a given planar embedding. The number of vertices encountered in a closed walk along the boundary of a face $f$ of $G$ is the \emph{degree} of $f$, denoted as $\delta(f)$. If $G$ is not $2$-connected, a vertex may be encountered more than once, thus contributing  more than one unit to the degree of the face.

%\subsection{Rectilinear Representations}\label{sse:ortho-rep}
\smallskip\noindent\textbf{Orthogonal Representations.}
%Orthogonal drawings have been the subject of a rich body of literature (refer to~\cite{dett-gdavg-99,DBLP:reference/crc/DuncanG13}).
Let $G=(V,E)$ be a planar graph with vertex degree at most four. A planar drawing $\Gamma$ of $G$ is \emph{orthogonal} if each edge is a polygonal chain consisting of horizontal and vertical segments. A \emph{bend} of an edge $e$ in $\Gamma$ is a point shared by two consecutive segments of $e$. An angle formed by two consecutive segments incident to the same vertex (resp. bend) is a \emph{vertex-angle} (resp. \emph{bend-angle}). An orthogonal representation of $G$ can be derived from $\Gamma$ and it  specifies the values of all vertex- and bend-angles (see~\cite{dett-gdavg-99,t-eggmnb-87}). %In other words, it provides a combinatorial description  of an equivalence class of orthogonal drawings of $G$ all having the same ``shape'' as $\Gamma$, disregarding the geometric information such as the length of the edge segments.
More formally, let $\mathcal E$ be a planar embedding of $G$, and let $e=(u,v)$ be an edge that belongs to the boundary of a face $f$ of $\mathcal E$. The two possible orientations $(u,v)$ and $(v,u)$ of $e$ are called \emph{darts}. A dart is \emph{counterclockwise with respect to $f$}, if $f$ is on the left side when walking along the dart following its orientation. Let $D(u)$ be the set of darts exiting from $u$ and let $D(f)$ be the set of counterclockwise darts~with~respect~to~$f$.

\begin{definition}\label{def:ortho}
Let $G=(V,E)$ be a planar graph with vertex degree at most four.
An \emph{orthogonal representation} $H$ of $G$ is a planar embedding $\mathcal E$ of $G$ and an assignment to each dart $(u,v)$ of two values $\alpha(u,v) = c_{\alpha} \cdot \ph$ and $\beta(u,v) = c_{\beta} \cdot \ph$, where $c_{\alpha} \in \{1,2,3,4\}$ and $c_{\beta} \in \mathbb{N}$, that satisfies the following conditions.
\begin{description}
\item[\bf C1.] For each vertex $u$: $\sum\limits_{(u,v) \in D(u)}\alpha(u,v)=2\p$;
\item[\bf C2.] For each internal face $f$: $\sum\limits_{(u,v) \in D(f)}(\alpha(u,v)+\beta(v,u)-\beta(u,v))=\p(\delta(f)-2)$;
\item[\bf C3.] For the outer face $f_{o}$: $\sum\limits_{(u,v) \in D(f_{o})}(\alpha(u,v)+\beta(v,u)-\beta(u,v))=\p(\delta(f_{o})+2)$.
\end{description}
\end{definition}

\noindent Let $f$ be the face counterclockwise with respect to dart $(u,v)$. The value $\alpha(u,v)$ represents the vertex-angle that dart $(u,v)$ forms with the dart following it in the circular counterclockwise order around $u$; we say that $\alpha(u,v)$ is a \emph{vertex-angle of $u$ in $f$}. The value $\beta(u,v)$ represents the sum of the $\ph$ bend-angles that dart $(u,v)$ forms in $f$. Condition {\bf C1} guarantees that the sum of angles around each vertex is valid, while  {\bf C2} (respectively, {\bf C3}) guarantees that the sum of the angles at the vertices  and at the bends of an internal face (respectively, outer face) is also valid. %equal to $\p (p-2)$ (respectively, $\p (p+2)$), where $p$ is the number of~such~angles.
Given an orthogonal representation of an $n$-vertex graph $G$, a corresponding orthogonal drawing can be computed in $O(n)$ time~\cite{t-eggmnb-87}.

%\subsection{Tree-decompositions}\label{sse:treewidth}
\smallskip\noindent\textbf{Tree-decompositions.}
%We recall some notions about treewidth and tree-decomposition of a graph~\cite{DBLP:journals/jal/RobertsonS86}.
Let $(\mathcal{X},T)$ be a pair such that $\mathcal{X}=\{X_1,X_2,\dots,X_\ell\}$ is a collection of subsets of vertices of a graph $G$ called \emph{bags}, and $T$ is a tree whose nodes are in a one-to-one mapping with the elements of $\mathcal X$. With a slight abuse of notation, $X_i$ will denote both a bag of $\mathcal{X}$ and the node of $T$ whose corresponding bag is $X_i$. The pair $(\mathcal{X},T)$ is a \emph{tree-decomposition} of $G$ if
%it satisfies the following two conditions
: (i) For every edge $(u,v)$ of $G$, there exists a bag $X_i$ that contains both $u$ and $v$, and (ii) For every vertex $v$ of $G$, the set of nodes of $T$ whose bags contain $v$ induces a non-empty (connected) subtree of $T$.
%
%\begin{itemize}
%
%\item For every edge $(u,v)$ of $G$, there exists a bag $X_i$ that contains both $u$ and $v$, and
%
%\item For every vertex $v$ of $G$, the set of nodes of $T$ whose bags contain $v$ induces a non-empty (connected)
%subtree of $T$.
%
%\end{itemize}
%
%If $(\mathcal{X},T)$ is a tree-decomposition of $G$, then each non-leaf bag $X_i$ defines a vertex cutset of $G$, i.e., removing all vertices of $X_i$ from $G$ splits the graph into multiple components. 
The \emph{width} of a tree-decomposition $(\mathcal{X},T)$ of $G$ is $\max_{i=1}^\ell {|X_i| - 1}$, and the \emph{treewidth} of $G$ is the minimum width of any tree-decomposition of $G$.
We  use a particular tree-decomposition (which always exists~\cite{DBLP:books/sp/Kloks94}) that limits the number of possible transitions between bags. 

\begin{definition}\emph{\cite{DBLP:books/sp/Kloks94}}\label{def:nice}
 A tree-decomposition $(\mathcal{X},T)$ of $G$ is \emph{nice} if $T$ is a rooted tree and: (a) Every node of $T$ has at most two children, (b) If a node  $X_i$ of $T$ has two children whose bags are $X_j$ and $X_{j'}$, then $X_i=X_j=X_{j'}$, (c) If a node $X_i$ of $T$ has only one child $X_j$, then there exists a vertex $v \in G$ such that either $X_i = X_j \cup \{v\}$ or $X_i \cup \{v\} = X_j$. In the former case of (c) we say that $X_i$ \emph{introduces} $v$, while in the latter case  $X_i$ \emph{forgets} $v$.
\end{definition}

\section{Orthogonal Sketches}\label{se:compact}

Recall that an orthogonal  representation of a planar graph $G$ corresponds to a planar embedding of $G$ and to an assignment of vertex- and bend-angles in each face of $G$. A fundamental observation for our approach is that the conditions that make an assignment of such angles a valid orthogonal representation of $G$ can be verified for each vertex and for each face independently. 
%Our testing algorithm (see \cref{se:algo}) decomposes the graph through a tree-decomposition $T$ of $G$. For each bag $X$ of $T$, which is a cutset of $G$ by definition, the algorithm maintains some  global information about the possible planar embeddings of $G$, together with some local information about the shape of each face containing a vertex of $X$. The local information is used to check whether the representation can be extended by inserting elements in a face, while the global information is used to check whether two representations that share a cutset can be glued together. Based on these observations, 
In what follows we define two equivalence relations on the set of orthogonal representations of $G$ that yields a set of equivalence classes whose size is bounded by some function of the width of $T$, of the number of degree-$2$ vertices of $G$, and of the~number~of~bends.

%\subsection{The $X$-equivalent Relation}\label{sse:xequiv}

\begin{figure}[t]
	\centering
	\begin{minipage}[b]{.48\textwidth}
		\centering
		\includegraphics[page=1, width=\textwidth]{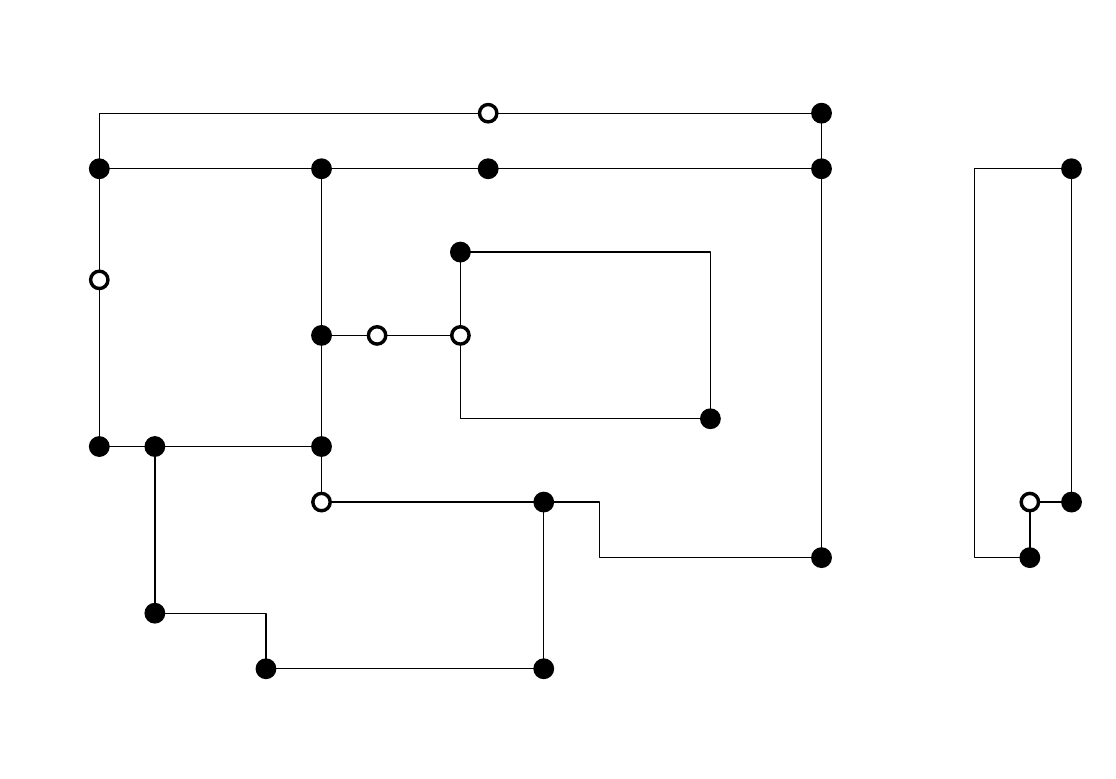}
		\subcaption{$\Gamma$}\label{fi:compact-a}
	\end{minipage}
	\hfil
	\begin{minipage}[b]{.48\textwidth}
		\centering
		\includegraphics[page=2, width=\textwidth]{figures/compact-example}
		\subcaption{~}\label{fi:compact-b}
	\end{minipage}
	\hfil
	\begin{minipage}[b]{.48\textwidth}
		\centering
		\includegraphics[page=4, width=\textwidth]{figures/compact-example}
		\subcaption{$C^*(H,X)$}\label{fi:compact-d}
	\end{minipage}
	\hfil
	\begin{minipage}[b]{.48\textwidth}
		\centering
		\includegraphics[page=3, width=\textwidth]{figures/compact-example}
		\subcaption{$\langle C(H,X), \phi,\rho\rangle$}\label{fi:compact-c}
	\end{minipage}
	\caption{\label{fi:compact-example} (a) An orthogonal drawing $\Gamma$  of a graph $G=(V,E)$ with $8$ bends; the white vertices define a set $X \subseteq V$. (b) The representing cycles of the active faces of $H$ with respect to $X$, where $H$ denotes the orthogonal representation of $\Gamma$. (c) The connected \compact $C^*(H,G)$. (d) The \shaped $\langle C(H,X), \phi,\rho\rangle$.}
\end{figure}

\smallskip\noindent\textbf{Sketched Embeddings.}
Let $H$ be an orthogonal representation of a planar graph $G=(V,E)$ and let $X \subseteq V$; see for example \cref{fi:compact-a}.
The vertices in $X$ are called \emph{active}. A face $f$ of $H$ is \emph{active} if it contains at least one active vertex. 
A \emph{representing cycle} $C_f$ of an active face $f$ is an oriented cycle such that: (i) It contains all and only the active vertices of $f$ in the order they appear in a closed walk along the boundary of $f$. (ii) $C_f$ is counterclockwise with respect to $f$, that is, $C_f$ is oriented coherently with the counterclockwise darts of face $f$. Notice that $C_f$ may be non-simple because a cut-vertex may appear multiple times when walking along $C_f$. Also, if $H$ contains distinct components, the outer face of each component is considered independently.  See \cref{fi:compact-b} for an illustration.

Let $H$ be an orthogonal representation of a planar graph $G$. We may conveniently focus on an orthogonal drawing $\Gamma$ that falls in the equivalence class of drawings having $H$ as an orthogonal representation. Assume first that $H$ is connected. The \emph{\compact} of $H$ with respect to $X$ is the plane graph $C(H,X)$ constructed as follows. For each active face $f$ we draw in $\Gamma$ its representing cycle $C_f$ by identifying the vertices of $C_f$ with the corresponding vertices of $f$ and by drawing the edges of $C_f$ inside $f$ without creating crossings. Graph $C(H,X)$ is the embedded graph formed by the edges that we drew inside the active faces. This is a plane graph by construction, it may be disconnected, and it may contain self-loops and multiple edges. Graph $C(H,X)$ has a face $f'$ for each representing cycle $C_f$ of an active face $f$ of $H$; we call $f'$ an \emph{active} face of $C(H,X)$.
If $H$ is not connected, a \compact $C(H_i,X)$ is computed for each connected component $H_i$ of $H$ ($i=1,\dots,h$) and the \compact of $H$ is $C(H,X)= \bigcup_{i=1}^{h} C(H_i,X)$. See for example \cref{fi:compact-b}.
 
%this procedure is applied to each connected component $H_i$ of $H$ ($i=1,\dots,h$) that contains at least one vertex of $X$. In this case, $C(H,X)= \bigcup_{i=1}^{h} C(H_i,X)$.

\begin{definition}\label{def:x-equiv}
	Let $G=(V,E)$ be a planar graph and let $X \subseteq V$.
	Let $H_1$ and $H_2$ be two orthogonal representations of $G$.
	$H_1$ and $H_2$ are \emph{$X$-equivalent} if they have the same \compact.
\end{definition}

Suppose that $H$ is connected. We now aim at computing a connected supergraph of $C(H,X)$.
By construction, the active faces of $C(H,X)$ may share vertices but not edges and hence $C(H,X)$ also contains faces that are not active. For each non-active face $g$ of $C(H,X)$, we add a dummy vertex $v_g$ in its interior and we connect it to all vertices on the boundary of $g$ by adding dummy edges. This turns $C(H,X)$ to a connected plane graph $C^*(H,X)$, which we call a \emph{connected} \compact of $H$ with respect to $X$. If $H$ is not connected, a connected \compact $C^*(H_i,X)$ is computed for each $C(H_i,X)$ independently, and the \emph{connected} \compact of $H$ is $C^*(H,X) = \bigcup_{i=1}^{h} C^*(H_i,X)$. \cref{fi:compact-d} shows a connected \compact obtained from \cref{fi:compact-b}. Observe that it may be possible to construct different connected \compact{s} of the same \compact. However, any connected \compact encodes the  information about the global structure of $H$ that is sufficient for the purposes of our algorithm. %(In particular, different connected \compact{s} differ only for the embedding of the dummy edges inside their non-active faces.)  
$C^*(H,X)$ (and hence $C(H,X)$) has a number of vertices that is $O(|X|)$ and a number of edges that is also $O(|X|)$ because $C^*(H,X)$ is planar and the multiplicity of an edge in $C^*(H,X)$ is at most four. 
%We bound the number of equivalence classes given by the $X$-equivalent relation.

\begin{restatable}[*]{lemma}{numofcompact}
	\label{le:num-of-compact}
	Let $G=(V,E)$ be a planar graph  and let $X \subseteq V$.
	Let $\mathcal{H}$ be the set of all possible orthogonal representations of $G$.
	The $X$-equivalent relation partitions $\mathcal{H}$ in at most $w^{O(w)}$ equivalence classes, where $w=|X|$.
\end{restatable}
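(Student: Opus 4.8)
# Proof Proposal

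The plan is to bound the number of $X$-equivalence classes by bounding the number of distinct connected sketched embeddings $C^*(H,X)$ that can arise. Since $H_1$ and $H_2$ are $X$-equivalent exactly when they have the same sketched embedding, it suffices to count the possible plane graphs that can play the role of $C(H,X)$ (or, equivalently, a canonical connected version of it). As already observed in the excerpt, any such $C(H,X)$ is a plane graph with $O(|X|)$ vertices and $O(|X|)$ edges, since it is planar and every edge has multiplicity at most four. The heart of the argument is therefore a counting bound on labeled plane multigraphs on a fixed vertex set of size $w = |X|$, together with the bounded-degree and bounded-multiplicity constraints.

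First I would observe that the vertex set of $C(H,X)$ is a subset of $X$ (the active vertices), so there are at most $2^{w}$ choices for it; this contributes only a $w^{O(w)}$ factor and can be absorbed. Next, fixing the vertex set, I would count plane multigraphs on these (at most $w$) vertices: the number of edges is $O(w)$, each edge is a pair from an $O(w)$-size set, so there are at most $w^{O(w)}$ choices for the underlying (multi)set of edges; and the cyclic order of edges around each vertex (the rotation system), which together with a choice of outer face determines the planar embedding, contributes at most $(O(1))! = O(1)$ per vertex since every vertex of $G$ has degree at most four and hence every vertex of $C^*(H,X)$ has bounded degree — wait, this last point needs care, because the dummy vertices $v_g$ added to form $C^*(H,X)$ can have unbounded degree. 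I would handle this by counting $C(H,X)$ itself (whose vertices are active vertices of $G$, each of degree at most four, so each has at most $4! = 24$ rotations, and moreover the representing cycles impose even tighter structure) rather than $C^*(H,X)$: the $X$-equivalence relation is defined via $C(H,X)$, and $C^*(H,X)$ is just an auxiliary object, so the count of $C(H,X)$ directly bounds the number of classes. Multiplying the $2^{w}$ choices of vertex set, the $w^{O(w)}$ choices of edge multiset, and the $O(1)^{w} = 2^{O(w)}$ choices of rotation system and outer face per component, we get a total of $w^{O(w)}$.

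The main obstacle I anticipate is making the "$O(|X|)$ edges" and "bounded rotation data" claims fully rigorous for a possibly disconnected, non-simple plane graph with self-loops, and in particular ruling out that a single edge of $C(H,X)$ could have multiplicity larger than four or that the total edge count could secretly blow up. For the multiplicity bound I would argue that two parallel edges of $C(H,X)$ between active vertices $u$ and $w$, drawn inside distinct active faces, separate the plane into regions each of which must contain a distinct active face (otherwise the two copies would be homotopic and one could be removed), and since each of $u,w$ is incident to at most four faces of $H$, the multiplicity is at most four; combined with $|V(C(H,X))| = O(w)$ and planarity (Euler's formula for the simple underlying graph), this yields $|E(C(H,X))| = O(w)$. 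For the rotation data, since the edges of $C(H,X)$ incident to an active vertex $v$ are drawn inside the faces of $H$ incident to $v$, of which there are at most four, the cyclic order of these edges around $v$ is determined by a choice of how the at-most-four bundles interleave, which is $O(1)$ many possibilities; summing over the at most $w$ active vertices and over the $2^{O(w)}$ choices of which representing cycle is the outer one for each component gives the $2^{O(w)}$ factor. Assembling these estimates gives the claimed $w^{O(w)}$ bound.
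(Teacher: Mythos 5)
Your overall strategy---bounding the number of classes by injectively encoding each sketched embedding $C(H,X)$ with a bounded amount of combinatorial data and then counting the encodings---is the same brute-force enumeration the paper uses (the paper encodes the set of representing cycles as an array of $O(w)$ cells with $O(w)$-bounded entries, then multiplies by the number of rotation systems and outer-face choices of a connected supergraph). Your bounds on the vertex set, the edge multiset, and the per-vertex rotation data are sound; note that since each active vertex has degree at most four in $G$, it appears at most four times on the boundaries of the active faces, so its degree in $C(H,X)$ is at most eight, which gives both the $O(w)$ edge count and the $O(1)$ rotations per vertex more directly than your homotopy argument for the edge multiplicity.

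The gap is in the step ``the rotation system, together with a choice of outer face, determines the planar embedding.'' This holds only for connected graphs, and $C(H,X)$ is in general disconnected: the active faces need not share any active vertices, even when $H$ itself is connected. For a disconnected plane graph one must additionally record which face of which component each remaining component is nested in; two sketched embeddings that agree on your entire descriptor but differ in this nesting have different face sets, hence are different plane graphs and represent different $X$-equivalence classes. Your map from classes to descriptors is therefore not injective, and the upper bound does not follow as written. You flag disconnectedness as an anticipated obstacle, but you then resolve only the multiplicity and per-vertex rotation issues, not the component-nesting issue. The fix is routine: either append the nesting data explicitly (there are $O(w)$ components, each assigned to one of $O(w)$ candidate containing faces, contributing another $w^{O(w)}$ factor), or do what the paper does---pass to the connected sketched embedding $C^{*}(H,X)$, observe that each $C(H,X)$ is recovered uniquely from any of its connected versions, and count those instead, where rotation system plus outer face does suffice. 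A smaller omission of the same kind: when $H$ itself is disconnected, the relative arrangement of the components of $H$ contributes a further factor (the paper charges an extra $w^{w}$ for this), which your count should also absorb.
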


%\subsection{The Shape-equivalent Relation}\label{sse:shapeequiv}
\smallskip\noindent\textbf{Orthogonal Sketches.}
Let $H$ be an orthogonal representation of a plane graph $G=(V,E)$ and let $X \subseteq V$. Let $C(H,X)$ be a \compact of $H$ with respect to $X$. Recall that $H$ is defined by two functions, $\alpha$ and $\beta$, that assign the vertex- and bend-angles made by darts inside their faces. The \emph{shape} of $C(H,X)$ consists of two functions $\phi$ and $\rho$ defined as follows. Let $(u,v)$ be a dart of $C(H,X)$, which corresponds to a path $\Pi_{uv}$ in $H$. Let $z$ be the vertex of $\Pi_{uv}$ adjacent to $u$ (possibly $z=v$). We set $\phi(u,v) = \alpha(u,z)$; the value $\phi(u,v)$ still represents the vertex-angle that $u$ makes in the face on the left of $(u,v)$. Function $\rho$ assigns to each dart $(u,v)$ of $C(H,X)$ a number that describes the shape of $\Pi_{uv}$ in $H$. More precisely, for each representing cycle $C_f$ and for each dart $(u,v)$ of $C_f$, $\rho(u,v,f)=n_\ph(u,v)-n_\pt(u,v)-2n_{2\p}(u,v)$, where $n_a(u,v)$ ($a \in \{\ph,\pt,2\p\}$) is the number of vertex- and bend-angles between $u$ and $v$ whose value is $a$. For example, \cref{fi:compact-c} shows a \compact together with its shape.  We call $\rho(u,v,f)$  the \emph{roll-up number}\footnote{It may be worth observing that other papers used conceptually similar definitions, called \emph{rotation} (see, e.g.,~\cite{DBLP:journals/algorithmica/BlasiusKRW14}) and \emph{spirality} (see, e.g.,~\cite{DBLP:journals/siamcomp/BattistaLV98}).} of $(u,v)$ in $f$. If $\phi(u,v)>\ph$ and $f$ is the counterclockwise face with respect to dart $(u,v)$, we say that $u$ is \emph{attachable} in $f$.  Two $X$-equivalent \compact{s} have the \emph{same shape}, if they have the same values of $\phi$ and $\rho$. A \compact $C(H,X)$, together with its shape $\langle \phi, \rho \rangle$, is called an \emph{\shaped} and it is denoted by $\langle C(H,X), \phi, \rho \rangle$.

\begin{definition}\label{def:shape-equiv}
	Let $G=(V,E)$ be a planar graph and let $X \subseteq V$.
	Let $H_1$ and $H_2$ be two  orthogonal representations of $G$.
	$H_1$ and $H_2$ are \emph{shape-equivalent} if they are $X$-equivalent and their \shaped{es} have the same shape.
\end{definition}

%We now give a technical lemma that will be used to check the validity of an \shaped and then bound the number of equivalence classes given by the shape-equivalent relation.

\begin{restatable}[*]{lemma}{validshape}
	\label{le:valid-shape}
	Let $\langle C(H,X), \phi, \rho \rangle$ be an \shaped.
	Let $C_f$ be a representing cycle of $C(H,X)$ and consider a closed walk along its boundary.
	Let $\rho^*$ be the sum of the roll-up numbers over all the traversed edges, and let $n_a$ be the number of encountered vertex-angles with value $a \in \{\ph, \pt, 2\p\}$. Then $\rho^*+n_\ph-n_\pt-2n_{2\p}=c$, with $c=4$ ($c=-4$) if $f$ is an inner (the outer) face.
\end{restatable}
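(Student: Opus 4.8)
The plan is to derive the identity directly from the classical face conditions C2 and C3, reread as a statement about discrete total turning. First, I would rewrite C2 and C3: writing every angle as $c\cdot\ph$ and using that $\delta(g)$ equals the number of darts that are counterclockwise with respect to a face $g$ of $H$, a one-line manipulation turns C2 into $\sum_{(u,v)\in D(g)}\bigl((2-c_\alpha(u,v))+(c_\beta(u,v)-c_\beta(v,u))\bigr)=4$ and turns C3 into the same sum being equal to $-4$ for the outer face. Here $2-c_\alpha(u,v)$ is the turn performed at the tail $u$ when walking counterclockwise along $\partial g$ (it is $+1,0,-1,-2$ when the angle of $u$ inside $g$ is $\ph,\p,\pt,2\p$), while $c_\beta(u,v)-c_\beta(v,u)$ is the net turn accumulated at the bends of the underlying edge when it is traversed from $u$ to $v$ ($+1$ per bend whose angle inside $g$ is $\ph$ and $-1$ per bend whose angle inside $g$ is $\pt$). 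Thus the rewriting states the well-known fact (see also~\cite{dett-gdavg-99,t-eggmnb-87}) that one counterclockwise traversal of $\partial g$ accumulates total turning $4$ if $g$ is internal and $-4$ if $g$ is the outer face, where the sum runs exactly once over every vertex-occurrence and every bend of $\partial g$ (so each occurrence of a cut-vertex is counted separately).

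Then I would let $f$ be the face of $H$ whose representing cycle is $C_f$, and cut the counterclockwise boundary walk of $f$ at the occurrences of active vertices. This splits the walk into arcs, one per dart of $C_f$: the arc of a dart $(u,v)$ of $C_f$ is exactly the active-vertex-free path $\Pi_{uv}$ that this dart represents (possibly non-simple when $H$ is not $2$-connected, which is harmless since everything below is counted with multiplicity). Since the arcs together with their tail active vertices cover every vertex-occurrence and every bend of $\partial f$ exactly once, the rewriting above gives
\[
c=\sum_{(u,v)\in D(C_f)}\bigl(t(u,v)+T(u,v)\bigr),
\]
where $t(u,v)$ is the turn at the tail $u$, $T(u,v)$ is the total turn at the interior features (non-active vertices and bends) of $\Pi_{uv}$, and $c=4$ if $f$ is internal and $c=-4$ otherwise. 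It remains to match the two summands with the shape. By definition $\phi(u,v)=\alpha(u,z)$ is the vertex-angle of $u$ inside $f$, hence precisely the angle turned through at $u$; thus $t(u,v)\in\{+1,0,-1,-2\}$ according to whether $\phi(u,v)\in\{\ph,\p,\pt,2\p\}$, and summing yields $\sum_{(u,v)\in D(C_f)}t(u,v)=n_\ph-n_\pt-2n_{2\p}$ (straight angles contribute $0$ and are not among the counted $n_a$). By definition $\rho(u,v,f)=n_\ph(u,v)-n_\pt(u,v)-2n_{2\p}(u,v)$ is the same $+1/-1/-2$ weighted count of the vertex- and bend-angles strictly between $u$ and $v$, so $T(u,v)=\rho(u,v,f)$ and $\sum_{(u,v)\in D(C_f)}T(u,v)=\rho^*$. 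Substituting into the displayed equation gives $\rho^*+n_\ph-n_\pt-2n_{2\p}=c$, which is the claim.

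I expect the main obstacle to be purely the bookkeeping in the non-$2$-connected case: a cut-vertex may occur several times along $\partial f$, hence along $C_f$, so the split into $C_f$-darts, the sum defining $\rho^*$, and the counters $n_\ph,n_\pt,n_{2\p}$ must all be taken with multiplicity; moreover, at each occurrence of an active vertex $u$ one must use the angle $\phi(u,v)$ of the \emph{outgoing} $C_f$-dart, which is exactly the angle that the definition of $\phi$ selects. Beyond that, the argument is just the arithmetic of turning angle values $c\cdot\ph$ into turns $2-c$.
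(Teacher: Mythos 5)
Your proof is correct and follows essentially the same route as the paper's: both reduce the claim to conditions \textbf{C2}/\textbf{C3}, reinterpret the angle sum over the face as the weighted count $n_\ph(f)-n_\pt(f)-2n_{2\p}(f)=\pm 4$, and then regroup the vertex- and bend-angles of $\partial f$ into the tail angles $\phi$ and the interior contributions $\rho(u,v,f)$ of the darts of $C_f$. Your ``total turning'' phrasing and the explicit multiplicity bookkeeping for cut-vertices are just a cleaner presentation of the same algebra.
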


\begin{restatable}[*]{lemma}{numofshapedcompact}
	\label{le:num-of-shaped-compact}
	Let $G=(V,E)$ be a planar graph with $\sigma$ vertices of degree two. Let $\mathcal{H}$ be the set of all possible orthogonal representations of $G$ with at most $b$ bends in total. The shape-equivalent relation partitions $\mathcal{H}$ in at most $w^{O(w)} \cdot (\sigma+b)^{w-1}$ equivalence classes, where $w=|X|$.
\end{restatable}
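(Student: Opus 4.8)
The plan is to bound, for a fixed active set $X$ with $|X|=w$, the number of shape-equivalence classes by first counting $X$-equivalence classes (the underlying \compact{s}) and then, within each such class, counting the possible distinct shapes $\langle\phi,\rho\rangle$. The first factor is already given: by \cref{le:num-of-compact} there are at most $w^{O(w)}$ distinct \compact{s}. So the work is to show that, given a fixed \compact $C(H,X)$, the number of compatible pairs $\langle\phi,\rho\rangle$ arising from orthogonal representations with at most $b$ total bends is at most $w^{O(w)}\cdot(\sigma+b)^{w-1}$, after which multiplying the two bounds and absorbing $w^{O(w)}\cdot w^{O(w)}=w^{O(w)}$ yields the claim.

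For the shape count, fix $C(H,X)$. Since $C(H,X)$ is planar with $O(w)$ vertices and edge-multiplicity at most four, it has $O(w)$ darts and $O(w)$ faces. The function $\phi$ assigns to each dart one of the four values in $\{\ph,\p,\pt,2\p\}$, so there are at most $4^{O(w)}=w^{O(w)}$ choices for $\phi$ — this is absorbed into the $w^{O(w)}$ factor. The essential point is to bound the number of choices for $\rho$. First I would argue that the roll-up numbers cannot all be chosen independently: by \cref{le:valid-shape}, for each representing cycle $C_f$ the sum of the roll-up numbers along its boundary is determined (it equals $c-(n_\ph-n_\pt-2n_{2\p})$, a quantity fixed once $\phi$ and the \compact are fixed). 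Since $C(H,X)$ has $O(w)$ faces, these give $O(w)$ linear constraints on the $O(w)$-dimensional vector of roll-up numbers; because each edge lies on exactly two faces and the constraints come from a planar structure, the constraint system has rank $\Theta(w)$ (one can peel off faces one at a time, each new face fixing at least one previously-free dart), so the roll-up vector has only $O(1)$ — more precisely, at most $w-1$ relevant — degrees of freedom once a spanning structure of "determined" darts is removed. Hence it suffices to bound the range of a single free roll-up number and raise it to the power of the number of free darts.

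For the range of one roll-up number I would use the key observation flagged in the introduction. Consider a dart $(u,v)$ of a representing cycle $C_f$, corresponding to a path $\Pi_{uv}$ in $H$ along the boundary of the active face $f$. Each internal vertex of $\Pi_{uv}$ has degree at least $3$ in $G$ if it is not a degree-$2$ vertex of $G$; such a vertex can contribute a vertex-angle of value $\ph$, $\p$ (adding $+1$ or $0$ to $n_\ph-n_\pt-2n_{2\p}$) but never $\pt$ or $2\p$ toward the "right-turn" side — only degree-$2$ vertices of $G$ and bends can contribute $\pt$ or $2\p$. Summing the definition $\rho(u,v,f)=n_\ph-n_\pt-2n_{2\p}$ over the whole face, the total positive contribution is $O(1)$ per face by \cref{le:valid-shape} (it equals the constant $c\in\{4,-4\}$ plus the at most $O(w)$ vertex-angle corrections at active vertices), while the total negative contribution is at least $-2(\sigma_f + b_f)$ where $\sigma_f$ and $b_f$ count the degree-$2$ vertices and bends charged to $f$; since $\sum_f \sigma_f = O(\sigma)$ and $\sum_f b_f = O(b)$, each individual roll-up number lies in an interval of length $O(\sigma+b)$. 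Therefore the number of choices for the free part of $\rho$ is $(O(\sigma+b))^{w-1} = w^{O(w)}\cdot(\sigma+b)^{w-1}$ after absorbing constants, and multiplying by the $w^{O(w)}$ choices for the \compact and for $\phi$ gives the stated bound.

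The main obstacle I expect is making the degrees-of-freedom argument precise: one must show that the $O(w)$ face-sum equations from \cref{le:valid-shape}, together with the choice of $\phi$ and the fixed \compact, really leave only $w-1$ (and not, say, $\Theta(w)$ with a large constant) free roll-up numbers, so that the exponent is exactly $w-1$ rather than $O(w)$. This amounts to a careful accounting: the faces of $C(H,X)$ form a planar structure where the face-vectors are linearly dependent (their sum is zero, or rather they satisfy the Euler-type relation), so the rank of the constraint system is exactly one less than the number of faces, and a counting argument — number of darts minus (number of faces $-1$), controlled via Euler's formula for the planar \compact with its $O(w)$ vertices — pins the number of free darts to $O(w)$; squeezing the constant down to land on exactly $w-1$ in the exponent requires choosing the free darts carefully (e.g. one free dart per representing cycle beyond a spanning forest), and this bookkeeping is where the proof needs the most care. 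Everything else — the $w^{O(w)}$ absorption, the $4$-valued bound on $\phi$, the interval length for a single roll-up number — is routine given the earlier lemmas.
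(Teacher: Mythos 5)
Your overall route is the same as the paper's: factor the count as (number of $X$-equivalence classes) $\times$ (number of shapes per class), invoke \cref{le:num-of-compact} for the first factor, absorb the $4^{O(w)}$ choices of $\phi$ into $w^{O(w)}$, bound each roll-up number to an interval of length $O(\sigma+b)$, and argue that only $w-1$ roll-up numbers can be fixed independently. Two of these steps, however, are not actually established by what you wrote.

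First, the range bound. Your claim that only degree-$2$ vertices of $G$ and bends can contribute $\pt$ or $2\p$ is false for $2\p$: a $2\p$ vertex-angle arises exactly at a degree-$1$ vertex, and degree-$1$ vertices are counted in neither $\sigma$ nor $b$. A single face may contain $\Theta(n)$ such angles (a tree of maximum degree four can have $\Theta(n)$ leaves and no degree-$2$ vertices), so the asserted bound of $-2(\sigma_f+b_f)$ on the negative contribution, and hence the interval length $O(\sigma+b)$ for a single roll-up number, does not follow. The paper closes this with a separate observation: every $2\p$ angle in a face is accompanied by two $\ph$ vertex-angles (at a cut-vertex) in the same face, so the $-2n_{2\p}$ term in $\rho=n_\ph-n_\pt-2n_{2\p}$ is cancelled by part of $n_\ph$ and only $-n_\pt\ge-(\sigma+b)$ survives; the upper bound $\rho\le\sigma+b+4$ then follows by applying \cref{le:valid-shape} to the entire representing cycle with the same cancellation. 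Some such compensation argument is indispensable, not a cosmetic detail.

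Second, the $w-1$ degrees of freedom. You want one linear constraint per face of $C(H,X)$ and Euler's formula to get $\#\text{edges}-(\#\text{faces}-1)=\#\text{vertices}-1\le w-1$ free darts. But \cref{le:valid-shape} as stated only constrains the \emph{active} faces (the representing cycles); without constraints on the non-active faces the count of free darts is only bounded by $\#\text{edges}-\#\text{active faces}$, which can exceed $w-1$ (two representing cycles on the same $w$ vertices give $2w$ edges and only two active faces). The missing idea --- which you flag as ``the main obstacle'' but do not supply --- is that the boundary of every face, indeed every cycle, of $C(H,X)$ satisfies the analogous equation once each $0$ vertex-angle of a non-active face is counted as two right turns, i.e.\ $\rho^*+2n_0+n_\ph-n_\pt-2n_{2\p}=\pm4$. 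The paper then realizes the count via a spanning forest of $C(H,X)$: one freely chosen dart per tree edge (its reverse dart being implied), with the dart of each non-tree edge implied by the cycle it closes; this is what pins the exponent to exactly $w-1$.
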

\begin{sketch}
We shall prove that $n_X \cdot n_S \leq w^{O(w)} \cdot (\sigma+b)^{w-1}$, where $n_X$ is the number of $X$-equivalent classes and $n_S$ is the number of possible shapes for each $X$-equivalent class. By \cref{le:num-of-compact}, $n_X \leq w^{O(w)}$; we can show  that $n_S \leq w^{O(w)}(\sigma+b)^{w-1}$. For a fixed \compact $C(H,X)$, a shape is defined by assigning to each dart $(u,v)$ of $C(H,X)$ the two values $\phi(u,v)$ and $\rho(u,v,f)$. The number of choices for the values $\phi(u,v)$ is at most $4^{4w} \leq w^{O(w)}$. As for the possible choices for $\rho(u,v,f)$, we claim that $-(\sigma+b) \le \rho(u,v,f) \le \sigma+b+4$ based on two observations. (1) The number of vertices and bends forming an angle of $\pt$ inside a face cannot be greater than $b + \sigma$. 	(2) For each vertex forming an angle of $2\p$ inside a face there are two vertex-angles of $\ph$ inside the same face. Finally, once the vertex-angles are fixed, the number of darts for which the roll-up number can be fixed independently is at most $w-1$. Thus, we have $w-1$ values to choose and $2(\sigma+b)+5$ choices for each of them.~\end{sketch}

\section{The Parameterized Algorithm}\label{se:algo}

\textbf{Overview.} We describe an algorithm, called \algo, that decides whether a planar graph $G$ admits an orthogonal drawing with at most $b$ bends in total, by using a dynamic programming approach on a nice tree-decomposition $T$ of $G$. The algorithm traverses $T$ bottom-up and decides whether the subgraph associated with each subtree admits an orthogonal drawing with at most $b$ bends. For each bag $X$, it stores all possible \shaped{es} and, for each of them, the minimum number of bends of any orthogonal representation encoded by that \shaped. To generate this record, \algo executes one of three possible procedures based on the type of transition with respect to the children of $X$ in $T$. If the execution of the procedure results in at least one \shaped, then the algorithm proceeds, otherwise it halts and returns a negative answer. If the root bag contains at least one \shaped, then the algorithm returns a positive answer. In the positive case, the information corresponding to the embedding of the graph  and to the vertex- and bend-angles can be reconstructed through a top-down traversal of $T$ so to obtain an orthogonal representation of $G$, and consequently an orthogonal drawing~\cite{t-eggmnb-87}.
% by applying the so-called compaction step in~\cite{t-eggmnb-87}.

\smallskip\noindent\textbf{The algorithm.} Let $G$ be an $n$-vertex planar graph with vertex degree at most four and with $\sigma$ vertices of degree two, and let $(\mathcal{X},T)$ be a nice tree-decomposition of $G$ of width $k$. Following a bottom-up  traversal of $T$, let $X_i$ be the next bag to be visited and let $B_i$ the set of all \shaped{es} of $X_i$. Let $w=k+1$ and recall that $|X_i| \le w$. Let $T_i$ be the subtree of $T$ rooted at $X_i$. Let $G_i$ be the subgraph of $G$ induced by all the vertices that belong to the bags in $T_i$. We distinguish the following four cases.

\textbf{$X_i$ is a leaf.} Without loss of generality, we can assume that $X_i$ contains only one vertex $v$ (as otherwise we can root in $X_i$ a chain of bags that introduce the vertices of $X_i$ one by one). Thus, $G_i$ contains only $v$ and it admits exactly one orthogonal representation with no bends. In particular, there is a unique \compact consisting of a single representing cycle $C_f$ having $v$ and no edges on its boundary. Also, the functions $\phi$ and $\rho$  are undefined.

\textbf{$X_i$ forgets a vertex.} Let $v$ be the vertex forgotten by $X_i$. Let $X_j$ be the child of $X_i$ in $T$.  In this case $G_i = G_j$ and we generate the \shaped{es} for $B_i$ by suitably updating those in $B_j$.
For each \shaped $\langle C(H,X_j), \phi, \rho \rangle$ in $B_j$ and for each representing cycle $C_f$ of $C(H,X_j)$ containing $v$, we apply the following operation. If $v$ is the only vertex of $C_f$, we remove $C_f$ from $C(H,X)$. Otherwise there are at most eight edges of $C_f$ incident to $v$, based on whether $v$ appears one or more times in a closed walk along $C_f$. We first remove all self-loops incident to $v$, if any. Let $(u_1,v)$, $(v,u_2)$ be any two edges of $C_f$ incident to $v$ that appear consecutively in a counterclockwise walk along $C_f$. For any such pair of edges we apply the following procedure. We remove the edges $(u_1,v)$, $(v,u_2)$ from $C_f$ and we add an edge $(u_1,u_2)$. We assign to the dart $(u_1,u_2)$ roll-up number equal to the sum of the roll-up numbers of darts $(u_1,v)$ and $(v,u_2)$ plus a constant $c$ defined as follows. If $\phi(v,u_2)=\p$, then $c=0$; if $\phi(v,u_2)=\ph$, then $c=1$; if $\phi(v,u_2)=\pt$, then $c=-1$; if $\phi(v,u_2)=2\p$, then $c=-2$. Once all consecutive pairs of edges incident to $v$ have been processed, $v$ is removed from $C_f$.
%
%If $v$ appears only once, then let $(u_1,v)$, $(v,u_2)$ be the two darts of $C_f$ incident to $v$. We remove $v$ from $C_f$ and we add an edge $(u_1,u_2)$  with roll-up number equal to the sum of the roll-up numbers of $(u_1,v)$ and $(v,u_2)$ plus a constant $c$ defined as follows. If $\phi(v,u_2)=\p$, then $c=0$; if $\phi(v,u_2)=\ph$, then $c=1$; if $\phi(v,u_2)=\pt$, then $c=-1$; if $\phi(v,u_2)=2\p$, then $c=-2$. If $v$ appears twice, we distinguish whether $C_f$ has a self-loop at $v$ or not. In the first case, we remove such a self-loop from $C_f$ and then we proceed as in the case in which $v$ appears only once. In the second case, there are four darts $(u_1,v)$, $(v,u_2)$, $(u_3,v)$, $(v,u_4)$, and the procedure used when $v$ appears only once can be applied for the two pairs $(u_1,v)$, $(v,u_2)$ and $(u_3,v)$, $(v,u_4)$. 
It is immediate to verify that \cref{le:valid-shape} holds for $C_f$ after applying this operation. See \cref{fi:introduce-c,fi:introduce-d} for an illustration.

	\begin{figure}[t]
		\centering
		\begin{minipage}[b]{.4\textwidth}
			\centering
			\includegraphics[page=3, width=\textwidth]{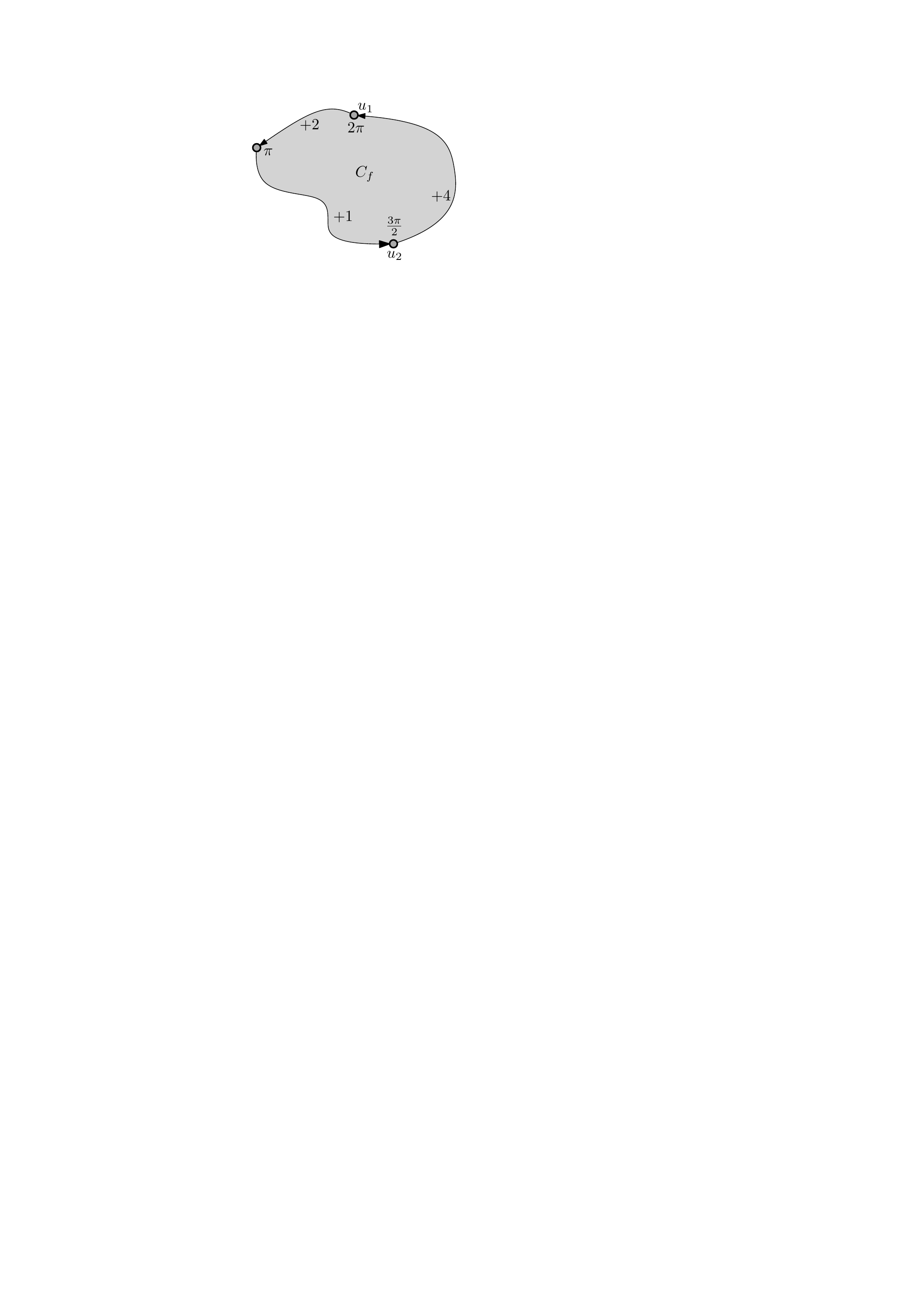}
			\subcaption{~}\label{fi:introduce-c}
		\end{minipage}
		\hfil
		\begin{minipage}[b]{.4\textwidth}
			\centering
			\includegraphics[page=4, width=\textwidth]{figures/introduce}
			\subcaption{~}\label{fi:introduce-d}
		\end{minipage}
		\caption{ A portion of a \shaped before and after removing the bigger vertices. }
	\end{figure}

The above operation does not change the number of bends associated with the resulting \shaped{es}, but it may create duplicated \shaped{es} for $B_i$, which we delete. When deleting the duplicates, we shall pay attention on pairs of \shaped{es} that are the same but with a different number of bends. To see this, let $(u,v)$ be a dart of an \shaped $\langle C(H,X_j), \phi, \rho \rangle$, which corresponds to a path $\Pi_{uv}$ in $H$, and let $f$ be the face on the left of this path in $H$. An angle along $\Pi_{uv}$ in $f$  may be both a vertex-angle or a bend-angle. Hence, removing $v$ from different \shaped{es} (with different numbers of bends) may result in a set of \shaped{es} that differ only in the number of bends (see also \cref{fi:removesameshape} for an illustration). In this case, the algorithm stores the one with fewer bends, because in every step of the algorithm (see also the next two cases), the information about the total number of bends of an \shaped is only used to verify that it does not exceed the given parameter $b$.

\textbf{$X_i$ introduces a vertex.} Let $v$ be the vertex introduced in $X_i$. Let $X_j$ be the child of $X_i$ in $T$.
If $v$ does not have neighbors in $X_i$, then $B_i$ is the union of $B_j$ and the (unique) \shaped of the graph with the single vertex $v$ (see the leaf case). Otherwise, let $u_1,\dots,u_h$ be the neighbors of $v$ in $X_i$, with $h \le w$. We generate $B_i$ from $B_j$ by applying the following procedure. At a high level, we first update each \compact that can be extracted from an \shaped in $B_j$ by adding $v$, we then generate all shapes for the resulting \compact{s}, and we finally discard those shapes that are not valid. 

Let $C(H,X_j)$ be a \compact for which there is at least one \shaped in $B_j$. Suppose first that $u_1,\dots,u_h$ all belong to the same component of $C(H,X_j)$. By planarity, an orthogonal representation of $G_j$, whose \compact is $C(H,X_j)$, can be extended with $v$ only if it contains at least one face having all of $v$'s neighbors on its boundary. This corresponds to verifying first the existence of a representing cycle $C_f$ in $C(H,X_j)$ with all of $v$'s neighbors on its boundary. We thus identify the representing cycles in which vertex $v$ can be inserted and connected to its neighbors. We consider each possible choice independently; for each choice we duplicate $C(H,X_j)$ and insert $v$ accordingly. For each of the resulting \compact{s}, we generate all possible shapes. Namely, for each representing cycle, we generate all possible vertex-angle assignments for its vertices and all possible roll-up numbers for its edges that satisfy \cref{le:valid-shape}. Next, for every such assignment, denoted by $\langle \overline{\phi},\overline{\rho} \rangle$, we verify its validity. Let $S_j$ be the set of \shaped{es} $\langle C(H,X_j), \phi, \rho \rangle$ of $B_j$ such that the restriction of $\langle \overline{\phi},\overline{\rho} \rangle$ to the edges of $C(H,X_j)$ corresponds to $\langle \phi, \rho \rangle$.  If  $S_j$ is empty, $\langle \overline{\phi},\overline{\rho} \rangle$ is discarded as it would not be possible to obtain it from $B_j$. Furthermore, observe that $\overline{\rho}(v,u_i)$ corresponds to the number of bends along the edge $(v,u_i)$. Thus, we should ensure that $b^*+\sum_{i=1}^h \overline{\rho}(v,u_i) \le b$, where $b^*$ is the number of bends of  $\langle C(H,X_j), \phi, \rho \rangle$. If this is not the case, again the shape is discarded. Finally, among the putative \shaped{es} generated, we store in $B_i$ only those for which  \cref{le:valid-shape} holds for each of its representing cycles.

\textbf{$X_i$ has two children.} Let $X_j$ and $X_{j'}$ be the children of $X_i$ in $T$. Recall that these three bags are all the same, although $G_j$ and $G_{j'}$ differ. The only orthogonal representations of $G_i$ are those that can be obtained by merging at the common vertices of $X_i$ an orthogonal representation of $G_j$ with an orthogonal representation of $G_{j'}$ in such a way that the resulting representation has a planar embedding, it has at most $b$ bends in total, and it satisfies \cref{def:ortho}. At a high level, this can be done by merging two connected \compact{s} (one in $B_j$ and one in $B_{j'}$) and then by verifying that there is a planar embedding for the merged graph such that \cref{le:valid-shape} is verified for each representing cycle, and the overall number of bends is at most $b$. We split this procedure in two phases.% (refer also to \cref{fi:join} in \cref{ap:algo} for an illustration).

Let $C(H,X_j)$ be a \compact for which there is at least one \shaped in $B_j$ and let $C(H',X_{j'})$ be a \compact for which there is at least one \shaped in $B_{j'}$. We first compute a connected \compact  $C^*(H,X_j)$ and a connected \compact $C^*(H',X_{j'})$. Let $\overline{C}$ be the union of these two graphs disregarding the rotation system and the choice of the outer face. For each connected component of  $\overline{C}$, we generate all possible planar embeddings. (The embeddings of $\overline{C}$ that are not planar are discarded because they correspond to  non-planar embeddings of $G_i$.) For each planar embedding of $\overline{C}$, we verify that the planar embedding of $\overline{C}$ restricted to the edges of $C(H,X_{j})$ corresponds to the planar embedding of $C(H,X_{j})$ and the same holds for the edges of $C(H',X_{j'})$. This condition ensures that the embedding of $\overline{C}$ can be obtained from those of $C(H,X_j)$ and $C(H',X_{j'})$. We then remove the dummy vertices and the dummy edges from $\overline{C}$ and we analyze each face of the resulting plane graph to verify whether the orientation of its edges is consistent. Namely, a face of a \compact  contains only edges that are either all counterclockwise or clockwise with respect to it. If this condition is not satisfied, the candidate \compact is discarded.

In the second phase, for each  generated \compact, we compute all of its possible shapes and test the validity of each of them. Let $\overline{C}$ be a \compact. For each representing cycle of $\overline{C}$, we generate all possible vertex-angle assignments for its vertices and roll-up numbers for its edges, keeping only those that satisfy \cref{le:valid-shape}. For every such assignment $\langle \overline{\phi},\overline{\rho} \rangle$, let $S_j$ be the set of \shaped{es} $\langle C(H,X_j), \phi, \rho \rangle$ of $B_j$ such that the restriction of $\langle \overline{\phi},\overline{\rho} \rangle$ to the edges of $C(H,X_j)$ corresponds to $\langle \phi, \rho \rangle$. Similarly, let $S_{j'}$ be the set of \shaped{es} $\langle C(H',X_{j'}), \phi', \rho' \rangle$ of $B_{j'}$ such that the restriction of $\langle \overline{\phi},\overline{\rho} \rangle$ to the edges of $C(H',X_{j'})$ corresponds to $\langle \phi', \rho' \rangle$. If any of $S_j$ and $S_{j'}$ is empty, $\langle \overline{\phi},\overline{\rho} \rangle$ is discarded as it would not be possible to obtain it from $B_j$ and $B_{j'}$. Finally, let $b^*_j$ and $b^*_{j'}$ be the minimum number of bends among the \shaped{es} of $S_j$ and $S_{j'}$, respectively. The set $E_i$ of edges shared by $C(H,X_j)$ and $C(H',X_{j'})$  contains edges (if any)  that connect pairs of vertices of $X_i$ and that belong to $G$. In particular, for each edge in $E_i$, the absolute value of its roll-up number corresponds to the number of bends along it. Hence, we verify that $b^*_j+b^*_{j'}-\sum_{(u,v) \in E_i} |\rho(u,v)| \le b$, otherwise we discard $\langle \overline{\phi},\overline{\rho} \rangle$. We conclude:

%The above discussion can be summarized as follows.

\begin{lemma}\label{le:correct}
	Graph $G$ admits an orthogonal drawing with at most $b$ bends if and only if algorithm \algo returns a positive answer.
\end{lemma}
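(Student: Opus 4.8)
The plan is to prove the two directions of the equivalence by a bottom-up induction over the nice tree-decomposition $T$, where the induction invariant states exactly what the algorithm maintains: for every bag $X_i$ and every orthogonal representation $H$ of $G_i$ with at most $b$ bends, the set $B_i$ contains the \shaped $\langle C(H,X_i),\phi,\rho\rangle$, and conversely every \shaped in $B_i$ is realized by some such $H$; moreover the bend count stored alongside each \shaped is the minimum over all representations of $G_i$ inducing it. Once this invariant is established, the lemma follows at the root: since the root bag is reached after forgetting all but (w.l.o.g.) a bounded set of vertices and $G_r = G$, $B_r$ is non-empty iff $G$ has an orthogonal representation with at most $b$ bends, which by the characterization in \cref{se:preliminaries} (Conditions C1--C3) is equivalent to $G$ admitting an orthogonal drawing with at most $b$ bends.

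The core of the argument is the inductive step, handled by the four cases of the algorithm. For the \emph{leaf} case the invariant is immediate. For the \emph{forget} case, $G_i = G_j$, so representations are in bijection; I would argue that contracting the two darts $(u_1,v),(v,u_2)$ into $(u_1,u_2)$ with the stated additive constant $c$ produces exactly the \compact and shape of the same $H$ with respect to the smaller active set $X_i$ --- the constant $c$ encodes the vertex-angle $\phi(v,u_2)$ that is absorbed into the path $\Pi_{u_1u_2}$, and one checks that $\rho(u_1,u_2,f)$ equals $n_\ph - n_\pt - 2n_{2\p}$ along the merged path. The remark in the excerpt that \cref{le:valid-shape} still holds after the operation is the needed consistency check. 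The treatment of duplicates, keeping the minimum bend count, preserves the third clause of the invariant because, as noted, an angle along $\Pi_{u_1u_2}$ may be realized either as a vertex-angle or a bend-angle, so different bend totals can collapse to the same \shaped, and only the minimum is ever used downstream. For the \emph{introduce} case, I would show that any orthogonal representation $H_i$ of $G_i$ restricts to a representation $H_j$ of $G_j$ (delete $v$ and its incident edges), whence $\langle C(H_i,X_i),\phi_i,\rho_i\rangle$ restricted to the edges of $C(H_j,X_j)$ is one of the \shaped{es} enumerated by the algorithm (this is why the check "$S_j \ne \emptyset$" is sound and complete); conversely, any valid shape on a \compact extended by $v$, together with an element of $S_j$, reconstitutes a genuine orthogonal representation of $G_i$ because \cref{le:valid-shape} over every representing cycle is precisely the face condition C2/C3, and C1 is guaranteed by the vertex-angle assignment summing to $2\pi$ at each vertex. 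The bend bookkeeping is exact since $\overline\rho(v,u_i)$ counts the bends on the new edge $(v,u_i)$ and these are the only new bends. The \emph{join} case is analogous but two-sided: an orthogonal representation of $G_i$ decomposes at the shared vertices of $X_i$ into representations of $G_j$ and $G_{j'}$ whose underlying embeddings glue to a planar embedding of $\overline C$; the algorithm enumerates all such planar embeddings, discards those inconsistent with the children's embeddings or with edge-orientation consistency within faces, then enumerates shapes and intersects against $S_j$ and $S_{j'}$. The only bends double-counted are those on edges of $E_i$ (present in both children), so subtracting $\sum_{(u,v)\in E_i}|\rho(u,v)|$ gives the correct total, and $b^*_j + b^*_{j'} - \sum|\rho| \le b$ is exactly the feasibility test.

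I expect the main obstacle to be the \emph{completeness} half of the \compact/shape correspondence in the introduce and join cases: one must argue that every planar embedding and every angle assignment the algorithm enumerates, once it passes the $S_j$ (and $S_{j'}$) non-emptiness test and the validity test of \cref{le:valid-shape}, actually \emph{lifts} to a bona fide orthogonal representation of the full subgraph $G_i$ --- not merely to a consistent-looking sketch. The subtlety is that $C(H,X)$ only records the active faces and active vertices, so one must verify that the non-active portions of the children's representations, which are untouched, remain consistent with the chosen outer face and with Conditions C1--C3 after the merge; this is where the connected \compact $C^*$ and the dummy vertices/edges do their work, ensuring the global embedding is pinned down up to the freedom the sketch is allowed to forget. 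I would isolate this as a separate claim: a \shaped together with witnesses in the $S$-sets determines an orthogonal representation of $G_i$ uniquely up to shape-equivalence, with the stated minimum bend count. Granting that claim, the induction goes through routinely and \cref{le:correct} follows.
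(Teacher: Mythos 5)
Your proposal is correct and takes essentially the same route as the paper: the paper offers no standalone proof of \cref{le:correct}, treating it as the conclusion of the case-by-case description of \algo, and your bottom-up induction over the nice tree-decomposition with the stated invariant (every orthogonal representation of $G_i$ with at most $b$ bends induces an \shaped stored in $B_i$, every stored \shaped is realizable, and the minimum bend count over realizations is what is recorded and suffices downstream) is precisely a formalization of that implicit argument. The lifting claim you isolate for the introduce and join cases is the same point the paper leaves implicit, so your treatment is, if anything, slightly more explicit than the paper's.
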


\begin{restatable}[*]{lemma}{time}
	\label{le:time}
	Algorithm \algo runs in $k^{O(k)}(b+\sigma)^{k}\log (b+\sigma) \cdot n$ time, where $k$ is the treewidth of $G$, $\sigma$ is the number of degree-two vertices of $G$, and $b$ is the maximum number of bends.
\end{restatable}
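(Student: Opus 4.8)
The plan is to charge the running time to the $O(kn)$ nodes of a nice tree-decomposition of $G$ --- obtained from the given width-$k$ tree-decomposition, without changing the width, in $k^{O(1)} n$ time~\cite{DBLP:books/sp/Kloks94} --- and then to bound, uniformly over the four node types, the time spent processing a single node. This per-node bound is the heart of the argument.

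\textbf{Sizes of the records.} Write $w=k+1$, so that $|X_i|\le w$ for every bag $X_i$. By \cref{le:num-of-shaped-compact}, the record $B_i$ contains at most $N:=w^{O(w)}(\sigma+b)^{w-1}=k^{O(k)}(\sigma+b)^{k}$ \shaped{es}, and by \cref{le:num-of-compact} these are supported on at most $M:=w^{O(w)}=k^{O(k)}$ distinct \compact{s}. Each \shaped, together with its stored minimum bend count, is described by $O(k)$ angle values (each from a four-element set) and $O(k)$ integers --- the roll-up numbers and the bend count --- each of $O(\log(\sigma+b))$ bits, hence by a bit string of length $O(k\log(\sigma+b))$. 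We would keep $B_i$ in a trie keyed by this description, so that a membership test or an insertion (keeping the smaller bend count upon a collision) costs $O(k\log(\sigma+b))$ time.

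\textbf{Time at one node.} The leaf case is $O(1)$. In the forget case, for each of the $\le N$ \shaped{es} of the child the surgery on the representing cycle through the forgotten vertex (deleting self-loops, contracting the $O(1)$ consecutive edge pairs at $v$ and updating their roll-up numbers) takes $O(k)$ time, and the insertion into the trie of $B_i$ takes $O(k\log(\sigma+b))$; the case thus costs $O(N\cdot k\log(\sigma+b))=k^{O(k)}(\sigma+b)^{k}\log(\sigma+b)$. In the introduce case, from each of the $\le M$ \compact{s} of the child and each of the $k^{O(1)}$ admissible placements of the new vertex $v$ into one of its representing cycles (each neighbour of $v$ lies on $O(1)$ representing cycles, and on a fixed one there are $k^{O(1)}$ placements of $v$ and $O(1)$ rotations at $v$) we obtain $M\cdot k^{O(1)}=k^{O(k)}$ candidate \compact{s}; for each we enumerate candidate shapes --- as in the proof of \cref{le:num-of-shaped-compact} --- by fixing one of $4^{O(k)}\le k^{O(k)}$ assignments of $\phi$ and, for it, one of $(2(\sigma+b)+5)^{w-1}=k^{O(k)}(\sigma+b)^{k}$ choices of roll-up numbers on the at most $w-1$ darts that can be set independently, the remaining roll-up numbers being forced via \cref{le:valid-shape}; for every such candidate, checking \cref{le:valid-shape} on the other cycles, testing membership of the restricted shape in $B_j$, verifying the bend budget, and inserting into $B_i$ all cost $O(k\log(\sigma+b))$, for a total of $k^{O(k)}(\sigma+b)^{k}\log(\sigma+b)$. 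The join case is analogous, with two further $k^{O(k)}$ factors: one for the choice of a \compact of $B_j$ together with a \compact of $B_{j'}$, and one for the at most $k^{O(k)}$ planar embeddings of their union $\overline C$ --- a graph on $O(k)$ vertices and $O(k)$ edges, hence with at most $\prod_v(\deg v)!\le(O(k))!=k^{O(k)}$ rotation systems and $O(k)$ choices of outer face per component --- none of which involves $\sigma$ or $b$; the $(\sigma+b)$-dependence again enters only through the $(\sigma+b)^{w-1}$ shapes of a single candidate \compact, so the case still costs $k^{O(k)}(\sigma+b)^{k}\log(\sigma+b)$.

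\textbf{Total, and the main obstacle.} Each node is processed in $k^{O(k)}(\sigma+b)^{k}\log(\sigma+b)$ time; summing over the $O(kn)$ nodes and absorbing the extra factor $O(k)$ into $k^{O(k)}$ yields the claimed $k^{O(k)}(\sigma+b)^{k}\log(\sigma+b)\cdot n$. We expect the main difficulty to be the exponent bookkeeping: one must verify that every polynomial-in-$k$ overhead --- Euler-formula counts, enumeration of planar embeddings and of insertion positions, trie operations, arithmetic on $O(\log(\sigma+b))$-bit roll-up numbers --- is absorbed by $k^{O(k)}$, and, most delicately in the join case, that the nested enumeration over a \compact of $B_j$, a \compact of $B_{j'}$, an embedding of $\overline C$, and a shape does not compound the $(\sigma+b)$-dependence; this holds precisely because \cref{le:num-of-shaped-compact} bounds the number of shapes of one \compact by $(\sigma+b)^{w-1}$, an exponent governed by $|X_i|$ rather than by the number of edges of the \compact.
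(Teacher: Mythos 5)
Your proposal is correct and follows essentially the same route as the paper's own proof: a per-node charge of $k^{O(k)}(\sigma+b)^{k}\log(\sigma+b)$ derived from \cref{le:num-of-compact} and \cref{le:num-of-shaped-compact} (in particular the fact that only $w-1$ roll-up numbers are free), with the same case analysis over leaf, forget, introduce and join nodes, summed over a linear number of bags of a nice tree-decomposition. The only deviations are implementation details --- a trie in place of the paper's sort-and-binary-search for duplicate removal and membership tests, and an $O(kn)$ rather than $O(n)$ node count, both absorbed into $k^{O(k)}$ --- which do not affect the bound.
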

\begin{sketch}
	Let $T'$ be a tree-decomposition of $G$ of width $k$ and with $O(n)$ nodes. We compute, in $O(k \cdot n)$ time, a nice tree-decomposition $T$ of $G$ of width $k$ and $O(n)$ nodes~\cite{DBLP:conf/iwpec/BodlaenderBL13,DBLP:books/sp/Kloks94}. In what follows, we prove that \algo spends $k^{O(k)}(b+\sigma)^{k}\log (b+\sigma)$ time for each bag $X_i$ of $T$. The claim trivially follows if $X_i$ is a leaf of $T$.
	If $X_i$ forgets a vertex $v$, \algo considers each \shaped of the child bag $X_j$, which are $k^{O(k)} \cdot (b+\sigma)^{k}$ by \cref{le:num-of-shaped-compact} (a bag of $T$ has at most $k+1$ vertices). For each \shaped, \algo removes $v$ from each of its $O(1)$ representing cycles. Clearly, this can be done in $O(1)$ time. Also, \algo removes possible duplicates in $B_i$. Note that, before removing the duplicates, the elements in $B_i$ are at most as many as those in $B_j$. To efficiently remove the duplicates in $B_i$, we represent each \shaped as a concatenation of three arrays, encoding its \compact (i.e., the rotation system and the outer face), its function $\phi$, and its function $\rho$. Thus we have a set of $N=k^{O(k)}(b+\sigma)^{k}$ arrays each of size $O(k)$. Sorting the elements of this set, and hence deleting all duplicates, takes $O(k) \cdot N \log N$ time, which, with some manipulations, can be rewritten as $k^{O(k)}(b+\sigma)^{k}\log (b+\sigma)$.
	If $X_i$ introduces a vertex $v$,  \algo considers each \compact that can be extracted from $B_j$. Each of them, is then extended with $v$ in all possible ways, which are $k^{O(k)}$ (observe that $|X_j| \le k-1$). Next \algo generates at most $k^{O(k)}(\sigma+b)^{k}$ \shaped{es}. We remark that, as explained in the proof of \cref{le:num-of-shaped-compact}, for each cycle of length $w \le k+1$ it suffices to generate the roll-up numbers of $w-1$ edges. Moreover, for the edges incident to $v$ the roll-up number is restricted to the range $[-b,+b]$ and it is subject to the additional constraint that the total number of bends should not exceed $b$. For each generated shape, \algo checks whether the corresponding subsets of the values of $\phi$ and $\rho$ exist in the \shaped{es} of $B_j$ having the fixed \compact. This can be done by encoding the values of $\phi$ and $\rho$ as two concatenated arrays, each of size $O(k)$, by sorting the set of arrays, and by searching among this set. Since the number of \shaped{es} is $N=k^{O(k)}(\sigma+b)^{k}$, this can be done in $O(k) \cdot N\log N=O(k) \cdot k^{O(k)}(\sigma+b)^{k} \, O(k)\log (k(\sigma+b))=k^{O(k)} (\sigma+b)^k \log (\sigma+b)$ for a fixed \compact, and in $k^{O(k)} (\sigma+b)^{k}\log (\sigma+b)$ time in total. Finally, it remains to check \cref{le:valid-shape} for each representing cycle, which takes $O(k^2)$ time for each of the $k^{O(k)} \cdot (\sigma+b)^{k}$ \shaped{es} in $B_i$.\end{sketch} %The case in which $X_i$ joins two bags can be found in \cref{ap:algo}.\end{sketch}

\noindent \cref{le:correct} and \cref{le:time} imply \cref{thm:main} and \cref{co:rpt-bt}.
\section{Applications}\label{se:extensions}

\noindent\textbf{HV Planarity.} Let $G$ be a graph such that each edge is labeled H (horizontal) or V (vertical). \hv asks whether $G$ has a planar drawing such that each edge is drawn as a horizontal or vertical segment according to its label, called a \emph{HV-drawing} (see, e.g.,~\cite{DBLP:journals/jcss/DidimoLP19,DBLP:conf/latin/DurocherF0M14}). This problem is NP-complete~\cite{DBLP:journals/jcss/DidimoLP19}. The next theorem follows~from~our~approach.

\begin{restatable}[*]{theorem}{hvrptbt}\label{th:hvrpt-bt}
	Let $G$ be an $n$-vertex planar graph with $\sigma$ vertices of degree two.
	Given a tree-decomposition of $G$ of width $k$, there is an algorithm that solves \hv in $k^{O(k)}\sigma^{k}\log \sigma \cdot n$ time. 
\end{restatable}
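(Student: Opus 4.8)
The plan is to obtain \cref{th:hvrpt-bt} as a direct adaptation of the machinery developed in \cref{se:algo} for \ortho, observing that \hv is essentially the ``zero-bend'' restriction of \ortho with an additional combinatorial constraint coming from the H/V labels. First I would note that an HV-drawing of $G$ is precisely a bend-free orthogonal drawing together with a consistent assignment of axis directions to edges; equivalently, it is an orthogonal representation $H$ of $G$ in which every dart $(u,v)$ has $\beta(u,v)=0$ and in which, for every face, the alternation of horizontal and vertical edges along its boundary is consistent with the labels. Concretely, walking along a face boundary, each vertex-angle of $\ph$ or $\pt$ toggles the axis (H$\leftrightarrow$V) while an angle of $\p$ keeps it and an angle of $2\p$ keeps it as well; the label of each edge must then match the axis predicted by this walk starting from any fixed edge. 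Since there are no bends, the roll-up number of a dart $(u,v)$ in $C(H,X)$ reduces to a pure vertex count $\rho(u,v,f)=n_\ph(u,v)-n_\pt(u,v)-2n_{2\p}(u,v)$ over vertex-angles only, and the ``how much a face can roll up'' phenomenon is now controlled solely by $\sigma$ (degree-2 vertices being the only vertices that may create a $\pt$ angle inside a face), with $b$ removed from the picture entirely.

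The key steps, in order, would be: (1) redefine \shaped{es} for the \hv setting by forbidding bend-angles, so that $\rho(u,v,f)$ ranges in $[-\sigma,\sigma+4]$ by the same two observations used in the proof sketch of \cref{le:num-of-shaped-compact} (every $\pt$ angle inside a face is a degree-2 vertex, and every $2\p$ angle is compensated by two $\ph$ angles in the same face), and augment each \shaped with, for each representing cycle, the axis label (H or V) of each of its edges; (2) re-derive the analogue of \cref{le:num-of-shaped-compact}, getting that the shape-equivalence relation has at most $w^{O(w)}\cdot\sigma^{w-1}$ classes, since we now choose $w-1$ roll-up numbers from a range of size $O(\sigma)$, the $\phi$-values from a constant-size set, and the axis labels from a $2$-element set (a further $2^{O(w)}$ factor absorbed into $w^{O(w)}$); (3) re-run \algo verbatim, with the only changes being that in the introduce-vertex and join steps we additionally require that the newly exposed edges carry roll-up number $0$ on the actual graph edges incident to the introduced vertex and that all edge labels inside each representing cycle are consistent with the H/V alternation dictated by the vertex-angles and with the input labeling of $G$; the bend-budget checks $b^*+\sum\overline\rho(v,u_i)\le b$ simply disappear (or become the constraint $\overline\rho(v,u_i)=0$ on real edges); (4) re-run the correctness argument of \cref{le:correct}, which is unchanged since the H/V-consistency condition is, like the orthogonal-representation conditions, checkable face-by-face; (5) re-run the running-time analysis of \cref{le:time} with $b$ set to $0$, yielding $k^{O(k)}\sigma^{k}\log\sigma\cdot n$, the extra bit per edge for the label not affecting the asymptotics.

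I expect the main obstacle to be a careful bookkeeping point rather than a conceptual one: making sure that the axis-consistency condition is genuinely local to faces and survives the forget/introduce/join operations on \compact{s}. In particular, when a vertex $v$ is forgotten and two darts $(u_1,v),(v,u_2)$ are merged into $(u_1,u_2)$, the label of the new dart must be computed from the labels of the two old darts together with the vertex-angle $\phi(v,u_2)$ (the axis toggles iff that angle is $\ph$ or $\pt$), and one must check this is well-defined (independent of which side we merge from) and compatible with \cref{le:valid-shape}; similarly, in the join step, when an edge of $G$ appears in both $C(H,X_j)$ and $C(H',X_{j'})$ its label must agree in the two \shaped{es} being merged, which is automatic because the label is part of the input. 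A secondary subtlety is the degenerate case where a representing cycle has length one or two and the alternation ``wraps around'': one should verify that the label consistency around such short cycles does not become vacuously satisfiable or over-constrained. Once these local checks are pinned down, the proof is a mechanical specialization of \cref{thm:main} with $b=0$ and the added label coordinate, and the stated bound follows.
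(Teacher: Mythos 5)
Your proposal is correct and follows essentially the same route as the paper: set $b=0$, augment each \shaped with an H/V direction for each edge of each representing cycle (the paper calls this a \emph{reference direction}), and use it together with the roll-up numbers and vertex-angles to check label consistency when edges are added, with the counting and running-time analyses specializing to $k^{O(k)}\sigma^{k}\log\sigma\cdot n$. Your write-up is in fact more explicit than the paper's about how the direction information is propagated through the forget and join steps.
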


The next corollary  improves the $O(n^4)$ bound in~\cite{DBLP:journals/jcss/DidimoLP19}.

\begin{corollary}
	Let $G$ be an $n$-vertex series-parallel graph.
	There is a $O(n^3 \log n)$-time algorithm that solves \hv.
\end{corollary}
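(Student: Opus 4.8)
The plan is to apply \cref{th:hvrpt-bt} with $k=2$. First I would recall two standard facts about series-parallel graphs: they are planar, and they have treewidth at most two (being exactly the graphs with no $K_4$ minor). Hence \hv is well-defined on an $n$-vertex series-parallel graph $G$, and a tree-decomposition of $G$ of width at most two exists. Moreover, since $k=2$ is a constant, such a tree-decomposition can be computed in $O(n)$ time~\cite{DBLP:journals/siamcomp/Bodlaender96}; alternatively one can read it off directly from the series-parallel decomposition tree of $G$.

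Next I would substitute $k=2$ into the running-time bound of \cref{th:hvrpt-bt}. Since $k$ is a constant, $k^{O(k)}=O(1)$, so the algorithm of \cref{th:hvrpt-bt} runs in $O(\sigma^{2}\log\sigma\cdot n)$ time, where $\sigma$ denotes the number of degree-two vertices of $G$. Because $\sigma\le n$, this is $O(n^{3}\log n)$. Adding the $O(n)$ preprocessing time needed to build the width-two tree-decomposition leaves the total at $O(n^{3}\log n)$, as claimed, which improves the previously known $O(n^{4})$ bound of Didimo et al.~\cite{DBLP:journals/jcss/DidimoLP19}.

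There is essentially no obstacle here: the whole argument reduces to checking that the hypotheses of \cref{th:hvrpt-bt} are met, and both the planarity of $G$ and the existence of a width-two tree-decomposition are immediate from the definition of a series-parallel graph. The one mild point worth noting is that this technique cannot do better, since $\sigma=\Theta(n)$ is possible for series-parallel graphs (for instance, a long path with a constant number of parallel edges added), so the $\sigma^{2}$ factor genuinely contributes $\Theta(n^{2})$ to the bound.
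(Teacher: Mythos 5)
Your proposal is correct and follows exactly the route the paper intends: instantiate \cref{th:hvrpt-bt} with $k=2$ (series-parallel graphs have treewidth at most two and a width-two tree-decomposition is computable in linear time), bound $\sigma\le n$, and read off $O(\sigma^2\log\sigma\cdot n)=O(n^3\log n)$. No gaps.
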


\smallskip

\noindent\textbf{Flexible Drawings.} Let $G=(V,E)$ be a planar graph with vertex degree at most four, and let $\psi: E \rightarrow \mathbb{N}$. The \flex problem~\cite{DBLP:journals/algorithmica/BlasiusKRW14,DBLP:journals/comgeo/BlasiusLR16} asks whether $G$ admits an orthogonal drawing such that for each edge $e \in E$ the number of bends of $e$ is $b(e) \le \psi(e)$. The problem becomes tractable when $\psi(e) \ge 1$~\cite{DBLP:journals/algorithmica/BlasiusKRW14} for all edges, while it can be parameterized by the number of edges $e$ such that $\psi(e)=0$~\cite{DBLP:journals/comgeo/BlasiusLR16}. By subdividing $\psi(e)$ times each edge $e$, we can conclude~the~following.

\begin{restatable}[*]{theorem}{flexdraw}\label{thm:flexdraw}
	Let $G=(V,E)$ be an $n$-vertex planar graph, and let $\psi: E \rightarrow \mathbb{N}$.
	Given a tree-decomposition of $G$ of width $k$, there is an algorithm that solves \flex in $k^{O(k)}(n \cdot b^*)^{k+1}\log(n \cdot b^*)$ time, where $b^* = \max_{e \in E}\psi(e)$.
\end{restatable}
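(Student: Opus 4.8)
The plan is to reduce \flex to \ortho with bend bound $b=0$ and then apply \cref{thm:main}. Let $(G,\psi)$ be an instance of \flex and let $b^* = \max_{e \in E}\psi(e)$; we may assume $b^* \ge 1$, since for $b^*=0$ the problem coincides with \ortho with $b=0$ and \cref{thm:main} applies directly. Let $G'$ be the graph obtained from $G$ by subdividing each edge $e=(u,v)$ exactly $\psi(e)$ times, that is, by replacing $e$ with a path $u, w^e_1, \dots, w^e_{\psi(e)}, v$. Then $G'$ is planar and has maximum degree four, because the $\psi(e)$ new vertices of each edge $e$ have degree two and all other degrees are unchanged. Moreover $G'$ has $n' = n + \sum_{e \in E}\psi(e) = O(n \cdot b^*)$ vertices (recall that $|E|=O(n)$ since $G$ is planar), and its number $\sigma'$ of degree-two vertices satisfies $\sigma' \le \sigma + \sum_{e \in E}\psi(e) = O(n \cdot b^*)$.

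The first step is to prove the correctness of the reduction: \emph{$G$ has an orthogonal drawing in which every edge $e$ has at most $\psi(e)$ bends if and only if $G'$ has a bend-free orthogonal drawing}. This is the standard use of edge subdivision in orthogonal drawings (see, e.g.,~\cite{t-eggmnb-87,DBLP:journals/algorithmica/BlasiusKRW14}). For the forward direction, from a $\psi$-respecting drawing of $G$ one obtains a bend-free orthogonal representation of $G'$ by placing, for each edge $e$, $b(e)$ of its subdivision vertices at the bends of $e$ (each such vertex getting vertex-angles $\ph$ and $\pt$) and the remaining $\psi(e)-b(e)$ subdivision vertices at interior points of straight segments (vertex-angles $\p$ and $\p$); the contribution of each former bend to the face conditions {\bf C2} and {\bf C3} of \cref{def:ortho} is then carried by the $\alpha$ value of the corresponding subdivision vertex rather than by the $\beta$ value of the edge. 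For the converse, condition {\bf C1} forces every degree-two vertex $w^e_i$ in a bend-free orthogonal representation of $G'$ to have vertex-angle pair $\{\p,\p\}$ or $\{\ph,\pt\}$; hence contracting the subdivision path of each edge $e$ produces at most $\psi(e)$ bends on $e$ (one per subdivision vertex of the latter type) and yields a valid orthogonal drawing of $G$.

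The second step is to lift the given tree-decomposition of $G$ to a tree-decomposition of $G'$ of the same width. We may assume $k \ge 2$ (if $k \le 1$ then $G$ is a forest, which always admits a bend-free orthogonal drawing, so \flex is trivially a yes-instance and a drawing is computable in linear time). For each edge $e=(u,v)$ of $G$, pick a bag $X_i$ containing both $u$ and $v$ and attach to $X_i$ a path of bags $\{u,v,w^e_1\}, \{v,w^e_1,w^e_2\}, \dots, \{v,w^e_{\psi(e)-1},w^e_{\psi(e)}\}$; this covers all edges of the subdivision path of $e$, preserves the connectedness of the subtree of every vertex, and introduces only bags of size three. The resulting tree-decomposition of $G'$ has width $k$. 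Running \algo of \cref{thm:main} on $G'$ with $b=0$ then takes $f(k,\sigma',0)\cdot n' = k^{O(k)}(\sigma')^{k}\log(\sigma') \cdot n' = k^{O(k)}(n\cdot b^*)^{k+1}\log(n\cdot b^*)$ time, and from the bend-free drawing of $G'$ returned by the algorithm we recover a $\psi$-respecting drawing of $G$ in linear time by contracting the subdivision paths.

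I expect the correctness of the reduction to be the step requiring the most care, specifically checking that conditions {\bf C2} and {\bf C3} of \cref{def:ortho} are preserved in both directions: when a bend of an edge $e$ is exchanged for a subdivision vertex, the degree of each of the two faces incident to $e$ increases by exactly the amount that the contribution of the subdivision vertex adds to the left-hand side of the corresponding face equation, and the two sides of $e$ must be handled consistently. The remaining items — the bounds on $n'$, $\sigma'$, the width of the lifted tree-decomposition, and the final running-time calculation — are routine.
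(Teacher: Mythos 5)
Your proposal is correct and follows essentially the same route as the paper: subdivide each edge $e$ exactly $\psi(e)$ times, observe that the subdivided graph has $O(n\cdot b^*)$ vertices of degree two, and run the algorithm of \cref{thm:main} with $b=0$. You supply some details the paper leaves implicit (the explicit lifting of the tree-decomposition and the verification of conditions \textbf{C2}--\textbf{C3}), and you also correctly write $\psi(e)-b(e)$ where the paper's converse direction has the sign reversed, but the underlying argument is identical.
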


\section{Open Problems}\label{se:conclusions}
%In this paper it is proved that \ortho is polynomial time solvable on graphs of bounded treewidth and hence, for these graphs, we can compute bend minimum orthogonal drawings in polynomial time. The approach can be extended to \hv and \flex. A corollary of our result is that both \ortho and \hv can be decided in $O(n^3 \log n)$ time for series-parallel graphs, which improves over the current literature and addresses an open question of~\cite{DBLP:journals/jcss/DidimoLP19}.

The results in this paper suggest some interesting questions. First, we ask whether \ortho is FPT when parameterized by the number of bends and by treewidth. Improving the time complexity of \cref{co:rpt-bt} is also an interesting problem on its own. Since \hv is NP-complete even for graphs with vertex degree at most three~\cite{DBLP:journals/jcss/DidimoLP19}, another research direction is to devise new FPT approaches for  \hv on subcubic planar graphs. 

%%
%% Bibliography
%%

%% Please use bibtex,

\clearpage

\bibliographystyle{splncs04}
\bibliography{paper}

\begin{thebibliography}{10}
\providecommand{\url}[1]{\texttt{#1}}
\providecommand{\urlprefix}{URL }
\providecommand{\doi}[1]{https://doi.org/#1}

\bibitem{DBLP:journals/jgaa/BannisterCE18}
Bannister, M.J., Cabello, S., Eppstein, D.: Parameterized complexity of
  1-planarity. J. Graph Algorithms Appl.  \textbf{22}(1),  23--49 (2018).
  \doi{10.7155/jgaa.00457}

\bibitem{DBLP:journals/jgaa/BannisterE18}
Bannister, M.J., Eppstein, D.: Crossing minimization for 1-page and 2-page
  drawings of graphs with bounded treewidth. J. Graph Algorithms Appl.
  \textbf{22}(4),  577--606 (2018). \doi{10.7155/jgaa.00479}

\bibitem{DBLP:journals/comgeo/BiedlK98}
Biedl, T.C., Kant, G.: A better heuristic for orthogonal graph drawings.
  Comput. Geom.  \textbf{9}(3),  159--180 (1998)

\bibitem{DBLP:journals/talg/BiedlLPS13}
Biedl, T.C., Lubiw, A., Petrick, M., Spriggs, M.J.: Morphing orthogonal planar
  graph drawings. {ACM} Trans. Algorithms  \textbf{9}(4),  29:1--29:24 (2013)

\bibitem{DBLP:journals/algorithmica/BlasiusKRW14}
Bl{\"{a}}sius, T., Krug, M., Rutter, I., Wagner, D.: Orthogonal graph drawing
  with flexibility constraints. Algorithmica  \textbf{68}(4),  859--885 (2014)

\bibitem{DBLP:journals/comgeo/BlasiusLR16}
Bl{\"{a}}sius, T., Lehmann, S., Rutter, I.: Orthogonal graph drawing with
  inflexible edges. Comput. Geom.  \textbf{55},  26--40 (2016)

\bibitem{DBLP:journals/siamcomp/Bodlaender96}
Bodlaender, H.L.: A linear-time algorithm for finding tree-decompositions of
  small treewidth. {SIAM} J. Comput.  \textbf{25}(6),  1305--1317 (1996)

\bibitem{DBLP:conf/iwpec/BodlaenderBL13}
Bodlaender, H.L., Bonsma, P.S., Lokshtanov, D.: The fine details of fast
  dynamic programming over tree decompositions. In: {IPEC} 2013. LNCS,
  vol.~8246, pp. 41--53. Springer (2013)

\bibitem{DBLP:journals/siamcomp/BodlaenderDDFLP16}
Bodlaender, H.L., Drange, P.G., Dregi, M.S., Fomin, F.V., Lokshtanov, D.,
  Pilipczuk, M.: A c\({}^{\mbox{k}}\) n 5-approximation algorithm for
  treewidth. {SIAM} J. Comput.  \textbf{45}(2),  317--378 (2016)

\bibitem{DBLP:conf/esa/Chan04}
Chan, H.Y.: A parameterized algorithm for upward planarity testing. In: Albers,
  S., Radzik, T. (eds.) {ESA} 2004. LNCS, vol.~3221, pp. 157--168. Springer
  (2004). \doi{10.1007/978-3-540-30140-0\_16}

\bibitem{DBLP:conf/compgeom/ChangY17}
Chang, Y., Yen, H.: On bend-minimized orthogonal drawings of planar 3-graphs.
  In: {SOCG} 2017. LIPIcs, vol.~77, pp. 29:1--29:15. Schloss Dagstuhl -
  Leibniz-Zentrum fuer Informatik (2017)

\bibitem{dett-gdavg-99}
{Di Battista}, G., Eades, P., Tamassia, R., Tollis, I.G.: Graph Drawing.
  Prentice-Hall (1999)

\bibitem{DBLP:journals/siamcomp/BattistaLV98}
{Di Battista}, G., Liotta, G., Vargiu, F.: Spirality and optimal orthogonal
  drawings. {SIAM} J. Comput.  \textbf{27}(6),  1764--1811 (1998)

\bibitem{DBLP:journals/siamdm/DidimoGL09}
Didimo, W., Giordano, F., Liotta, G.: Upward spirality and upward planarity
  testing. {SIAM} J. Discrete Math.  \textbf{23}(4),  1842--1899 (2009).
  \doi{10.1137/070696854}

\bibitem{DBLP:conf/isaac/DidimoL98}
Didimo, W., Liotta, G.: Computing orthogonal drawings in a variable embedding
  setting. In: {ISAAC} 1998. LNCS, vol.~1533, pp. 79--88. Springer (1998)

\bibitem{DBLP:conf/gd/DidimoLP18}
Didimo, W., Liotta, G., Patrignani, M.: Bend-minimum orthogonal drawings in
  quadratic time. In: GD 2018. LNCS, vol. 11282, pp. 481--494. Springer (2018)

\bibitem{DBLP:journals/jcss/DidimoLP19}
Didimo, W., Liotta, G., Patrignani, M.: {HV}-planarity: Algorithms and
  complexity. J. Comput. Syst. Sci.  \textbf{99},  72--90 (2019)

\bibitem{DBLP:series/mcs/DowneyF99}
Downey, R.G., Fellows, M.R.: Parameterized Complexity. Monographs in Computer
  Science, Springer (1999)

\bibitem{DBLP:journals/algorithmica/DujmovicFKLMNRRWW08}
Dujmovi\'c, V., Fellows, M.R., Kitching, M., Liotta, G., McCartin, C.,
  Nishimura, N., Ragde, P., Rosamond, F.A., Whitesides, S., Wood, D.R.: On the
  parameterized complexity of layered graph drawing. Algorithmica
  \textbf{52}(2),  267--292 (2008)

\bibitem{DBLP:journals/jda/DujmovicFK08}
Dujmovi\'c, V., Fernau, H., Kaufmann, M.: Fixed parameter algorithms for
  one-sided crossing minimization revisited. J. Discrete Algorithms
  \textbf{6}(2),  313--323 (2008)

\bibitem{DBLP:journals/algorithmica/DujmovicW04}
Dujmovi\'c, V., Whitesides, S.: An efficient fixed parameter tractable
  algorithm for 1-sided crossing minimization. Algorithmica  \textbf{40}(1),
  15--31 (2004)

\bibitem{DBLP:reference/crc/DuncanG13}
Duncan, C.A., Goodrich, M.T.: Planar orthogonal and polyline drawing
  algorithms. In: Handbook of Graph Drawing and Visualization, pp. 223--246.
  Chapman and Hall/CRC (2013)

\bibitem{DBLP:conf/latin/DurocherF0M14}
Durocher, S., Felsner, S., Mehrabi, S., Mondal, D.: Drawing {HV}-restricted
  planar graphs. In: {LATIN} 2014. LNCS, vol.~8392, pp. 156--167. Springer
  (2014)

\bibitem{DBLP:journals/siamcomp/GargT01}
Garg, A., Tamassia, R.: On the computational complexity of upward and
  rectilinear planarity testing. {SIAM} J. Comput.  \textbf{31}(2),  601--625
  (2001)

\bibitem{DBLP:journals/ijfcs/HealyL06}
Healy, P., Lynch, K.: Two fixed-parameter tractable algorithms for testing
  upward planarity. Int. J. Found. Comput. Sci.  \textbf{17}(5),  1095--1114
  (2006). \doi{10.1142/S0129054106004285}

\bibitem{DBLP:books/sp/Kloks94}
Kloks, T.: Treewidth, Computations and Approximations, LNCS, vol.~842. Springer
  (1994)

\bibitem{DBLP:journals/ieicet/RahmanEN05}
Rahman, M.S., Egi, N., Nishizeki, T.: No-bend orthogonal drawings of
  subdivisions of planar triconnected cubic graphs. {IEICE} Transactions
  \textbf{88-D}(1),  23--30 (2005)

\bibitem{DBLP:journals/jal/RobertsonS86}
Robertson, N., Seymour, P.D.: Graph minors. {II.} algorithmic aspects of
  tree-width. J. Algorithms  \textbf{7}(3),  309--322 (1986)

\bibitem{t-eggmnb-87}
Tamassia, R.: On embedding a graph in the grid with the minimum number of
  bends. SIAM J. Comp.  \textbf{16}(3),  421--444 (1987)

\bibitem{DBLP:journals/siamdm/ZhouN08}
Zhou, X., Nishizeki, T.: Orthogonal drawings of series-parallel graphs with
  minimum bends. {SIAM} J. Discrete Math.  \textbf{22}(4),  1570--1604 (2008)

\end{thebibliography}

\clearpage

\appendix

\section*{Appendix}

\section{Additional Material for \cref{se:compact}}\label{ap:compact}

\numofcompact*
\begin{proof}
	We assume first that $G$ is connected and hence any of its orthogonal representations is connected. 
	
	Let $S(w)$ be the set of all possible sets of representing cycles that can be obtained from $w$ vertices. 
	We show an upper bound on the size of $S(w)$ by showing that each set of representing cycles can be encoded with an array of $O(w)$ cells each storing a value that is $O(w)$.   
	
	In any orthogonal representation of $G$ there are at most $4w$ representing cycles (at most four cycles for each vertex of $X$). Let $v_0,v_1,\dots,v_{w-1}$ be the vertices in $X$ and let $c_1,c_2,\dots,c_{h}$ be a set $R \in S(w)$ of representing cycles of size $h \leq 4w$. For each vertex $v_i$ we use the cells of indices $8i+j$, for $0 \leq j \leq 3$ to represent the cycles which $v_i$ belongs to. Namely, we store in each of these cells a value $1 \leq l \leq h$ to indicate that $v_i$ belongs to $c_l$; if $v$ belongs to $t < 4$ cycles, then $4-t$ cells will have value $0$. Moreover, for each vertex $v_i$ we use the cell of index $8i+j$, for $4 \leq j \leq 7$ to indicate the position of vertex $v_i$ in the cycle stored in the cell of index $8i+j-4$ in clockwise order starting from the vertex with lowest index. Since each set in $S(w)$ can be encoded in this way and since no two sets in $S(w)$ have the same encoding, we have that $|S(w)| \in w^{O(w)}$.
	
	For a given set of representing cycles $R \in S(w)$, let $\mathcal{C}(R)$ be the set of all possible \compact{s} that can be obtained from $R$. We now give a bound on the size of $\mathcal{C}(R)$. This number can be obtained as the product of three terms: the number of distinct \compact{s} that can be built from $R$ disregarding the choice of the rotation system and of the  outer face, the number of distinct rotation systems for each such a graph, and the number of possible choices for the outer face. Observe that for each \compact there is at least one connected \compact and that for each connected \compact there is a unique \compact that produced it, that is, the number of distinct \compact{s} is at most the number of distinct connected \compact{s}. Since there are at most $4w$ dummy vertices in any connected \compact, and since each active vertex is adjacent to at most four dummy vertices by construction, it follows that the first term is $w^{O(w)}$. The second term is also $w^{O(w)}$, because each vertex has degree $O(w)$. The number of faces is $O(w)$ because the graph is planar, it has $O(w)$ vertices, and each edge has multiplicity at most two. It follows that $|\mathcal{C}(R)| \in w^{O(w)}$.   
	
	Thus, we conclude that the number of equivalence classes is 

	$$\sum_{R \in S(w)}{|\mathcal{C}(R)|} \le \sum_{R \in S(w)}{w^{O(w)}} \le |S(w)| \cdot w^{O(w)} \le w^{O(w)} \cdot w^{O(w)} \in w^{O(w)}.$$

	If $G$ is not connected, it has at most $w$ components each containing at least one vertex of $X$, and hence the upper bound derived above should be multiplied by at most $w^w$.
\qed\end{proof}

\validshape*
\begin{proof}
	Let $f$ be the face of $H$ represented by $C_f$. Suppose $f$ is an inner face; the argument is similar for the outer face. Let $n^v_a(f)$ and $n^b_a(f)$ be the number of vertex- and bend-angles, respectively, encountered in a closed walk along the boundary of $f$ whose value is $a \in \{\ph, \p, \pt, 2\p\}$; also, let $n_a(f)=n^v_a(f)+n^b_a(f)$ (note that $n_{2\p}(f)=n^v_{2\p}(f)$ and $n_{\p}(f)=n^v_{\p}(f)$). By definition of roll-up number we have $\rho^*+n_\ph-n_\pt-2n_{2\p}=n_\ph(f)-n_{\pt}(f)-2n_{2\p}(f)$. By \cref{def:ortho} we have $\sum_{(u,v) \in D(f)}(\alpha(u,v)+\beta(v,u)-\beta(u,v))=\pi(\delta(f)-2)$. Moreover, $\sum_{(u,v) \in D(f)}(\alpha(u,v)+\beta(v,u)-\beta(u,v))=\ph \cdot n^v_\ph(f)+ \p \cdot n^v_\p(f) + \pt \cdot n^v_{\pt}(f)+2\p \cdot n^v_{2\p}(f) +\ph \cdot n^b_{\pt}(f) -\ph \cdot n^b_\ph(f)$ and  $\delta(f)=n^v_\ph(f)+n^v_\p(f)+n^v_{\pt}(f)+n^v_{2\p}(f)$. This implies that $\ph \cdot n^v_\ph(f)+ \p \cdot n^v_\p(f) + \pt \cdot n^v_{\pt}(f)+2\p \cdot n^v_{2\p}(f) +\ph \cdot n^b_{\pt}(f) -\ph \cdot n^b_\ph(f) = \p (n^v_\ph(f)+n^v_\p(f)+n^v_{\pt}(f)+n^v_{2\p}(f) - 2)$ and thus $n_\ph(f)-n_{\pt}(f)-2n_{2\p}(f)=4$.
\qed\end{proof}

\numofshapedcompact*
\begin{proof}
	By definition, two orthogonal representations are shape-equivalent if they are $X$-equivalent and their \shaped{es} have the same shape. Thus the number of shape-equivalent classes is given by the number $n_X$ of $X$-equivalent classes multiplied by the number $n_S$ of possible shapes for each $X$-equivalent class. By \cref{le:num-of-compact}, $n_X \leq w^{O(w)}$; we now show that $n_S \leq w^{O(w)}(\sigma+b)^{w-1}$, which proves the statement. For a fixed \compact $C(H,X)$ a shape is defined by assigning to each dart $(u,v)$ of $C(H,X)$ the two values $\phi(u,v)$ and $\rho(u,v,f)$. For each vertex $u$ we have at most four darts exiting from $u$, thus $4w$ in total, each of which can assume one among four values. It follows that the number of choices for the values $\phi(u,v)$ is at most $4^{4w} \leq w^{O(w)}$ (note that $w \geq 2$). As for the possible choices for $\rho(u,v,f)$, each representing cycle $C_f$ of $C(H,X)$ contains at most $w$ vertices and therefore at most $w$ edges. We claim that $-(\sigma+b) \le \rho(u,v,f) \le \sigma+b+4$. The proof of this claim is based on two easy observations. 
	First, the number of vertices and bends forming an angle of $\pt$ inside a face cannot be greater than $b$ plus the number of degree-2 vertices (if a vertex with degree larger than two formed an angle of $\pt$ inside a face, then the other two or three angles around it should sum up to $\ph$). 
	Second, for each vertex forming an angle of $2\p$ inside a face there are two vertex-angles of $\ph$ inside the same face:  
	if a vertex forms an angle of $2\p$, then it has degree one and therefore there must be a cut-vertex forming two $\ph$ 
	angles inside the face (see \cref{fi:2piangle}). 
	Consider a dart $(u,v)$; its roll-up number is defined as $\rho(u,v,f)=n_\ph(u,v)-n_\pt(u,v)-2n_{2\p}(u,v)$. By the second observation above the term $n_\ph(u,v)$ can be written as $2n_{2\p}(u,v)+n'_\ph(u,v)$ (recall that the $2\p$ angles in a face can only be vertex-angles); thus $\rho(u,v,f)=n'_\ph(u,v)-n_\pt(u,v)$. Since $n'_\ph(u,v) \geq 0$ and $n_\pt(u,v) \leq \sigma + b$ (by the first observation above), we have $\rho(u,v,f) \geq - (\sigma+b)$. This proves the lower bound on $\rho(u,v,f)$. Consider now the upper bound. Let $u_1,u_2,\dots,u_h$ be the vertices in $C_f$ and assume that $\rho(u_1,u_2,f) > \sigma+b$. By \cref{le:valid-shape} we have that $\sum_{i=1}^h\rho(u_i,u_{i+1},f)+n_\ph-n_\pt-2n_{2\p} \leq 4$. Each $\rho(u_i,u_{i+1},f)$ can be written as $n_\ph(u_i,u_{i+1})-n_\pt(u_i,u_{i+1})-2n_{2\p}(u_i,u_{i+1})$ and therefore we can write $\rho(u_1,u_{2},f)+\sum_{i=2}^h(n_\ph(u_i,u_{i+1})-n_\pt(u_i,u_{i+1})-2n_{2\p}(u_i,u_{i+1}))+n_\ph-n_\pt-2n_{2\p} \leq 4$. This inequality can be rewritten as $\rho(u_1,u_{2},f)+n'_\ph(f)-n'_\pt(f)-2n'_{2\p}(f) \leq 4$, where $n'_\ph(f)$, $n'_\pt(f)$, and $2n'_{2\p}(f)$ are the number of vertex- and bend-angles of $\ph$, $\pt$, and $2\p$, respectively, inside the face $f$ of $H$ whose representing cycle is $C_f$ in $C(H,X)$, excluded those in the path represented by dart $(u_1,u_2)$. Analogously to the previous case, the contribution of the vertices forming a $2\p$ angle inside $f$ is balanced by a suitable number of vertex-angles of $\ph$. Removing these contributions, we can write the inequality as $\rho(u_1,u_{2},f)+n''_\ph(f)-n'_\pt(f) \leq 4$, that is $\rho(u_1,u_{2},f) \leq -n''_\ph(f)+n'_\pt(f) + 4$. By the first observation above $n'_\pt(f) \leq \sigma+b$ and therefore $\rho(u_1,u_{2},f) \leq \sigma+b + 4$, which proves our upper bound.
	
	\begin{figure}[t]
		\centering
		\begin{minipage}[b]{.25\textwidth}
			\centering 
			\includegraphics[page=1, width=\textwidth]{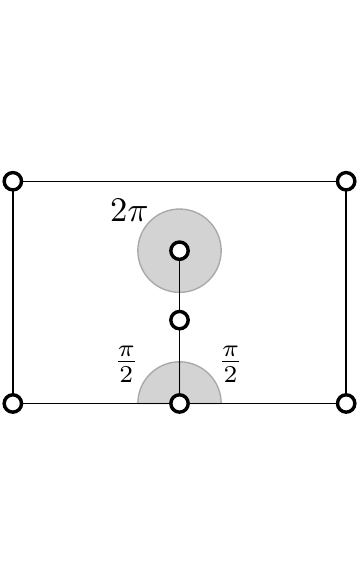}
			\subcaption{~}\label{fi:2piangle}
		\end{minipage}
		\hfil
		\begin{minipage}[b]{.48\textwidth}
			\centering 
			\includegraphics[page=1, width=\textwidth]{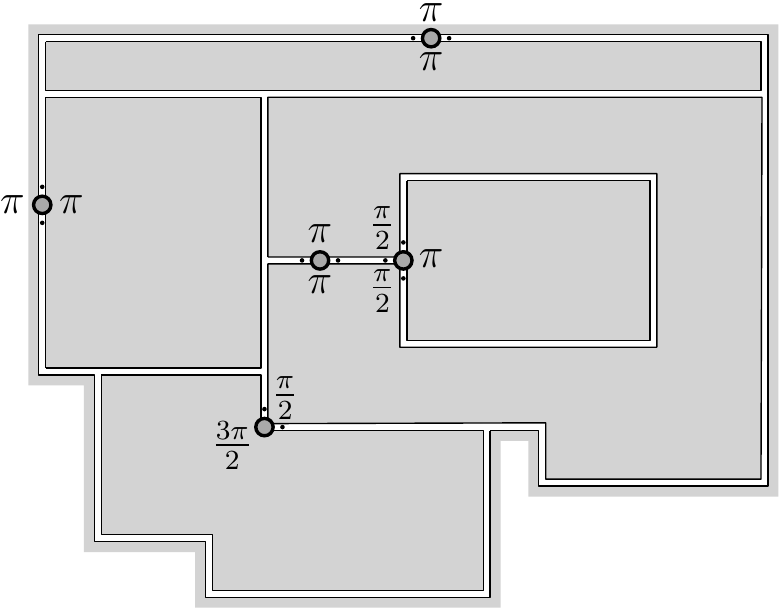}
			\subcaption{~}\label{fi:shape-non-active}
		\end{minipage}
		\caption{(a) For each $2\p$ angle there exist two $\ph$ angles. (b) An orthogonal representation of the largest component of the \compact of \cref{fi:compact-example}; the non active faces (the white ones) have all vertex-angles equal to $0$ (these angles are indicated with a small dot).}
	\end{figure}
	
	In order to conclude the proof, we now show that once the vertex-angles are fixed, the number of darts for which the roll-up number can be fixed independently is at most $w-1$. The key observation is that the roll-up number of the various darts are not independent from one another. For example, if we fix the values of the roll-up number of all the edges of a representing cycle $C_f$ except one (considering that the values of the vertex-angles are also fixed), the roll-up number of the remaining edge is implied because $C_f$ has to satisfy \cref{le:valid-shape}. More in general, the roll-up numbers of the various darts must be assigned in such a way that they correspond to a feasible orthogonal representation. To this aim, we must assign the roll-up numbers in such a way that: (i) the two darts of each edge receive opposite values; (ii) each face of $C(H,X)$ (active or not) satisfies \cref{le:valid-shape}. There is only one difference between active and non-active faces: the vertex-angles of a non-active face are all equal to $0$ (see \cref{fi:shape-non-active} for an example). As a consequence, in the equation of \cref{le:valid-shape} we must take into account the contribution of these vertex-angles, each of which corresponds to two right turns. The equation then becomes $\rho^*+2n_0+n_\ph-n_\pt-2n_{2\p}= \pm 4$ where $n_0$ are the number of vertex-angles equal to $0$. Notice that, this condition must actually be satisfied by any cycle (representing or not) of $C(H,X)$. In particular, for each cycle $\gamma$ in $C(H,X)$ the quantity $\rho(\gamma)=\rho^*+2n_0+n_\ph-n_\pt-2n_{2\p}$ must be equal to $4$ if we consider the darts of the edges oriented so to keep the interior of the cycle to the left, and to $-4$ if we consider the dart of the edges oriented so to keep the exterior of the cycle to the left. We call \emph{inner darts} those in the first set and \emph{outer darts} those in the second set. 
	
	We prove that we can fix independently the roll-up numbers of at most $w-1$ darts of $C(H,X)$. Consider a spanning tree $S$, which has at most $w-1$ edges (it may be a forest). Remove all edges that do not belong to $S$. Observe that, when removing an edge $(u,v)$, we replace two vertex-angles at $u$ with a single vertex-angle at $u$, which is equal to the sum of the original vertex-angles; we do the same for the vertex-angles at $v$. For each edge of $S$ we can choose independently the roll-up number of one of the two darts of that edge, while the second one is implied by the first one. Hence the number of darts that we can fix independently is at most $w-1$. Every edge $(u,v)$ of $C(H,X)$ that is not in $S$ creates a cycle together with the edges of $S$ and therefore if we add the edge $(u,v)$ back (and restore the original values of the vertex-angles), the roll-up number of the dart of $(u,v)$ that is inner for $\gamma$ is implied by the remaining inner darts of $\gamma$, while the roll-up number of dart $(v,u)$ is implied by that of $(u,v)$. Hence, the number of darts that we can fix independently is $w-1$.
	
	%We prove by induction on the number $n_C$ of simple cycles of $C(H,X)$ that we can fix independently the roll-up numbers of at most $w-1$ darts. In the base case, $n_C=0$, $C(H,X)$ is a forest, and therefore it has at most $w-1$ edges. For each edge we can choose independently the roll-up number of one of the two darts, while the second one is implied by the first one. Hence the number of darts that we can fix independently is at most $w-1$. Suppose now that $n_C \geq 1$ and consider a simple cycle $\gamma$. Remove an edge $(u,v)$ of $\gamma$ so to remove the cycle from $C(H,X)	$. When removing this edge, we must replace two vertex-angles at $u$ with a single vertex-angle at $u$, which is equal to the sum of the original vertex-angles. The same is true for the vertex-angles at $v$. Let $C'(H,X)$ be the graph resulting from the removal of $(u,v)$; the number of simple cycles in $C'(H,X)$ is $n_C-1$ and the number of vertices is still $w$. By induction the number of darts for which we can fix independently the roll-up number is $w-1$. If we add the edge $(u,v)$ back (and restore the original values of the vertex-angles), the roll-up number of the dart of $(u,v)$ that is inner for $\gamma$ is implied by the remaining inner darts of $\gamma$. The roll-up number of dart $(v,u)$ is implied, and therefore the number of darts that we can fix independently is again $w-1$.
	
	In conclusion we have $w^{O(w)}$ possible choices for the values $\phi(u,v)$ and for each of these choices we can independently choose the value $\rho(u,v,f)$ for at most $w-1$ darts. We have at most $2(\sigma+b) + 5$ possible values for $\rho(u,v,f)$ (all integer values in the range $[-\sigma-b,\sigma+b+4]$) and hence there are $w^{O(w)}(\sigma+b)^{w-1}$  possible shapes for each $X$-equivalent class, i.e, $n_S \leq w^{O(w)}(\sigma+b)^{w-1}$. 
\qed\end{proof}

\section{Additional Material for \cref{se:algo}}\label{ap:algo}

\time*
\emph{Missing part of the proof.}
We conclude the proof considering the case when $X_i$ joins two bags $X_j$ and $X_{j'}$, \algo first computes a connected \compact of each \compact in $B_j$ and in $B_{j'}$. This takes $k^{O(k)}$ time, as there are $k^{O(k)}$ \compact{s} in $B_j$ and $B_{j'}$ by \cref{le:num-of-compact} and computing a connected \compact takes $O(k)$ time. For each of the $k^{O(k)}$ pairs of \shaped{es} (one in $B_j$ and one in $B_{j'}$),  \algo computes the union of the two graphs and generates all of its possible embeddings, which are $k^{O(k)}$. Consider the union of $\langle C(H,X_j), \phi, \rho \rangle$ and $\langle C(H',X_{j'}), \phi', \rho' \rangle$. For each such unions, \algo checks if the embedding of $C(H,X_j)$ (resp. $C(H',X_{j'})$) is contained in a \compact of $B_j$ (resp. $B_{j'}$). This can be done as follows. Each embedding in $B_j$ (resp. $B_{j'}$) is encoded as an array, the obtained set of arrays is sorted, and the embedding of  $C(H,X_j)$ (resp. $C(H',X_{j'})$) is searched in this set through a binary search. Since there are $N=k^{O(k)}$ \compact{s} in $B_j$ and $B_{j'}$, and since each array encoding an embedding has $O(k)$ size, the sorting takes $O(k) \cdot N \log N=O(k) \cdot k^{O(k)} \, O(k) \log k=k^{O(k)}$ time, and the binary search takes  $O(k) \cdot \log N=O(k) \cdot O(k) \log k =O(k^2\log k)$.

For each union and for each fixed pair of \compact{s}, algorithm \algo generates at most $k^{O(k)}(\sigma+b)^{k}$ \shaped{es}. Observe that, as explained in the proof of \cref{le:num-of-shaped-compact}, for each cycle of length $w \le k+1$ it suffices to generate the roll-up numbers of $w-1$ edges. For each generated shape, \algo checks whether the corresponding subsets of the values of $\phi$ and $\rho$ exist in the \shaped{es} of $B_j$ and $B_{j'}$ having the fixed \compact{s}. This can be done by encoding the values of $\phi$ and $\rho$ as two concatenated arrays, each of size $O(k)$, by sorting the set of arrays, and by searching among this set. Since the number of \shaped{es} is $N=k^{O(k)}(\sigma+b)^{k}$, this can be done in $O(k) \cdot N\log N=O(k) \cdot k^{O(k)}(\sigma+b)^{k} \, O(k)\log (k(\sigma+b))=k^{O(k)} (\sigma+b)^k \log (\sigma+b)$ for a fixed union, and in $k^{O(k)} (\sigma+b)^{k}\log (\sigma+b)$ time in total. Finally, it remains to check \cref{le:valid-shape} for each representing cycle, which takes $O(k^2)$ time for each of the $k^{O(k)} \cdot (\sigma+b)^{k}$ \shaped{es} in $B_i$. 
\qed

	\begin{figure}[htbp]
	\centering
	\begin{minipage}[b]{.4\textwidth}
		\centering
		\includegraphics[page=1, width=\textwidth]{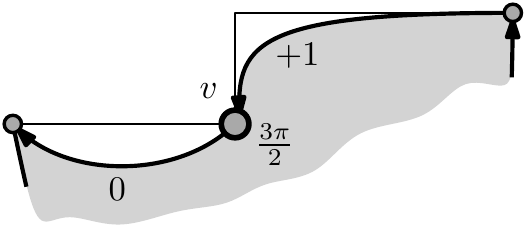}
		\subcaption{~}\label{fi:removesameshape-a}
	\end{minipage}
	\hfil
	\begin{minipage}[b]{.4\textwidth}
		\centering
		\includegraphics[page=2, width=\textwidth]{figures/removesameshape}
		\subcaption{~}\label{fi:removesameshape-b}
	\end{minipage}
	\caption{\label{fi:removesameshape}Two portions of two different \shaped{es} with one and two bends, respectively. When the bigger vertex is removed, they will give rise to two \shaped{es} that are the same but have a different number of bends. }
\end{figure}

\begin{figure}[htbp]
	\centering
	\begin{minipage}[b]{.48\textwidth}
		\centering 
		\includegraphics[page=1, width=\textwidth]{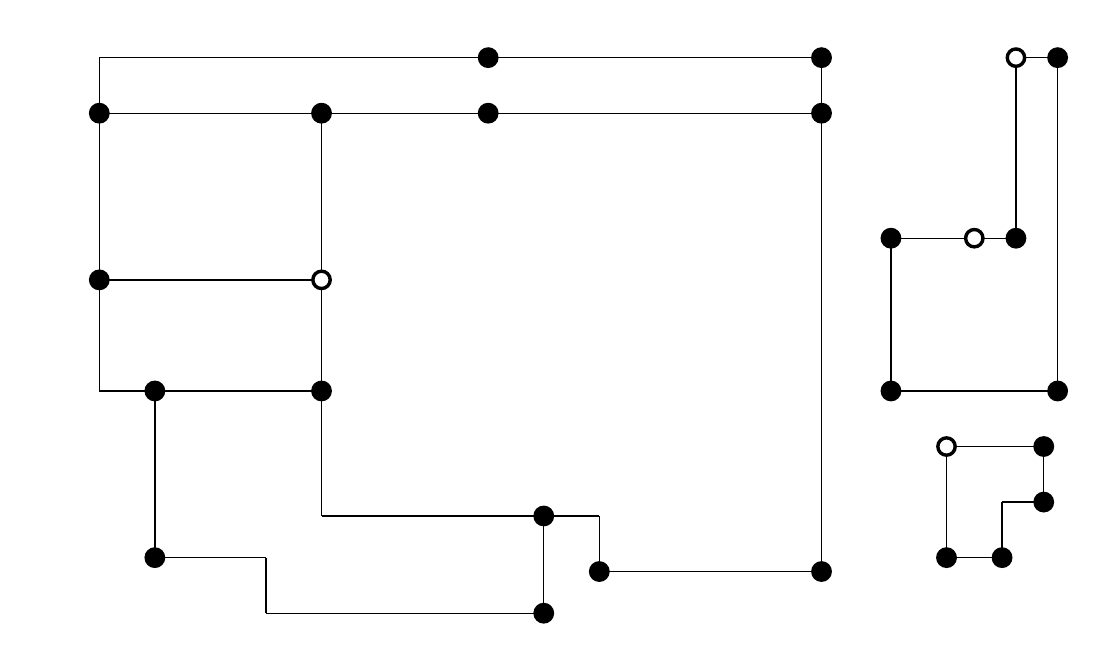}
		\subcaption{~}\label{fi:introduce-disconnected-a}
	\end{minipage}
	\hfil
	\begin{minipage}[b]{.48\textwidth}
		\centering 
		\includegraphics[page=2, width=\textwidth]{figures/introduce-disconnected}
		\subcaption{~}\label{fi:introduce-disconnected-b}
	\end{minipage}
	\begin{minipage}[b]{.48\textwidth}
		\centering 
		\includegraphics[page=3, width=\textwidth]{figures/introduce-disconnected}
		\subcaption{~}\label{fi:introduce-disconnected-c}
	\end{minipage}
	\hfil
	\begin{minipage}[b]{.48\textwidth}
		\centering 
		\includegraphics[page=4, width=\textwidth]{figures/introduce-disconnected}
		\subcaption{~}\label{fi:introduce-disconnected-d}
	\end{minipage}
	\caption{\label{fi:introduce-disconnected} Extending a disconnected orthogonal representation with an introduced vertex $v$. Vertex $v$ and its neighbors are shown in white. (a)--(b) There exists an inner face containing a neighbor of $v$;  (c)--(d) No internal face contains a neighbor of $v$. }
\end{figure}

\begin{figure}[htbp]
	\centering
	\begin{minipage}[b]{.48\textwidth}
		\centering 
		\includegraphics[page=1, width=\textwidth]{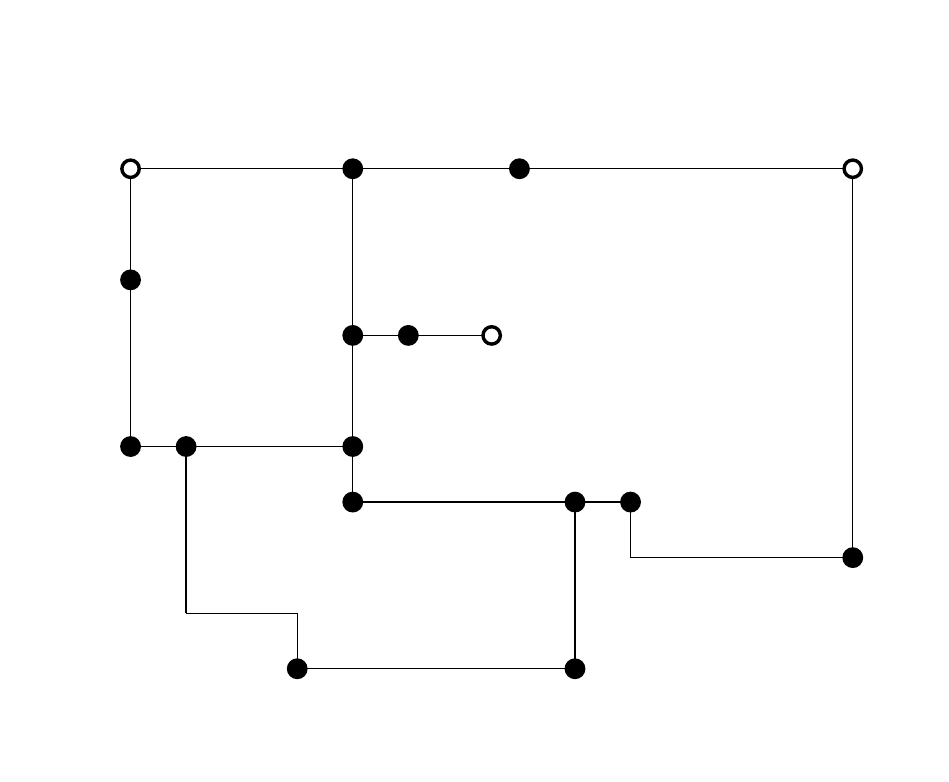}
		\subcaption{~}\label{fi:join-a}
	\end{minipage}
	\hfil
	\begin{minipage}[b]{.48\textwidth}
		\centering 
		\includegraphics[page=3, width=\textwidth]{figures/join}
		\subcaption{~}\label{fi:join-b}
	\end{minipage}
	\begin{minipage}[b]{.48\textwidth}
		\centering 
		\includegraphics[page=4, width=\textwidth]{figures/join}
		\subcaption{~}\label{fi:join-c}
	\end{minipage}
	\hfil
	\begin{minipage}[b]{.48\textwidth}
		\centering 
		\includegraphics[page=6, width=\textwidth]{figures/join}
		\subcaption{~}\label{fi:join-d}
	\end{minipage}
	\begin{minipage}[b]{.48\textwidth}
		\centering 
		\includegraphics[page=8, width=\textwidth]{figures/join}
		\subcaption{~}\label{fi:join-e}
	\end{minipage}
	\hfil
	\begin{minipage}[b]{.48\textwidth}
		\centering 
		\includegraphics[page=7, width=\textwidth]{figures/join}
		\subcaption{~}\label{fi:join-f}
	\end{minipage}
	\caption{\label{fi:join} Merging two orthogonal representations sharing a cutset $X$. Vertices in $X$ are shown in white. (a)--(b) A rectilinear representation $H_1$ and its connected \compact $C^*(H_1,X)$ with the shape corresponding to $H_1$. (c)--(d) A rectilinear representation $H_2$ and its connected \compact $C^*(H_2,X)$ with the shape corresponding to $H_2$.  (e)--(f) The union of the two connected \compact{s} and a corresponding rectilinear representation.}
\end{figure}

\section{Additional Material for \cref{se:extensions}}\label{ap:extensions}

\hvrptbt*
\begin{proof}
	We modify \algo as follows. First of all, since in an HV-drawing the edges have no bends, we set $b=0$. Second, in order to ensure that an edge is drawn according to its label (H or V), we keep a \emph{reference direction} for each edge of each representing cycle in an \shaped. Observe that an edge $e=(u,v)$ is added to an \shaped when $u$ is in a bag $X_j$ and $v$ is introduced by the parent bag $X_i$ of $X_j$. Recall that an edge of a representing cycle corresponds to a path (possibly consisting of a single edge). Its reference direction is either H or V based on whether the first edge of the path is horizontal or vertical in every orthogonal representation encoded by that \shaped.  This information,  together with the roll-up number of that edge and with the vertex-angle $\phi(u,v)$ (or $\phi(v,u)$ based on the orientation of the edge) suffice to determine the direction of $e$ and hence to verify whether its direction is consistent with its label. Clearly, if $u$ is an isolated vertex, there is no reference direction and $e$ can be drawn with the desired direction.
\qed\end{proof}

\flexdraw*
\begin{proof}
	It suffices to subdivide $\psi(e)$ times each edge $e$ of $G$ and then decide whether the resulting graph $G'$ admits an orthogonal drawing with $b=0$. If such a drawing exists, each subdivision vertex can be interpreted either as a bend if one of the two angles it makes is greater than $\p$, or as an interior point of a segment otherwise. On the other hand, if an orthogonal drawing of the original graph $G$ exists such that each edge $e$ has at most $\psi(e)$ bends, then each bend of an edge $e$ can be replaced with a subdivision vertex. Also, if the number of bends $b(e)$ of $e$ is less than $\psi(e)$, we can further subdivide $b(e)-\psi(e)$ times any segment representing $e$ so to obtain an orthogonal drawing of $G'$ with $b=0$.    
\qed\end{proof}
\end{document}